\def\@xfootnote[#1]{%
  \protected@xdef\@thefnmark{#1}%
  \@footnotemark\@footnotetext}
\newcommand{\bmat}{\left(\begin{array}}
\newcommand{\emat}{\end{array}\right)}
\def\p{\partial}
\def\a{\alpha}
\def\d{\delta}
\def\om{\omega}
\def\-{\hphantom{-}}
\def\s2{\frac{1}{\sqrt2}}
\def\beq{\begin{equation}}
\def\eeq{\end{equation}}
\def\beqa{\begin{eqnarray}}
\def\eeqa{\end{eqnarray}}
\def\dim{{\rm dim \,}}
\def\Dsl{\,\raise.15ex\hbox{/}\mkern-13.5mu D} 
\def\CN {{\cal N}}
\def\CL {{\cal L}}
\def\be{\begin{equation}}
\def\ee{\end{equation}}
\def\bea{\begin{eqnarray}}
\def\eea{\end{eqnarray}}
\def\raw{\rightarrow}
\def\IC{\mathbb{C}}
\def\IZ{\mathbb{Z}}
\def\IR{\mathbb{R}}
\def\a{{\alpha}}
\def\om{{\omega}}
\def\p{{\partial}}
\newsavebox{\@brx}
\newcommand{\llangle}[1][]{\savebox{\@brx}{\(\m@th{#1\langle}\)}%
  \mathopen{\copy\@brx\kern-0.5\wd\@brx\usebox{\@brx}}}
\newcommand{\rrangle}[1][]{\savebox{\@brx}{\(\m@th{#1\rangle}\)}%
  \mathclose{\copy\@brx\kern-0.5\wd\@brx\usebox{\@brx}}}
\def\sm2{{\mbox{\small 2}}}
\newcommand{\bp}{\begin{pmatrix*}[r]}  
\newcommand{\ep}{\end{pmatrix*}}  
\newcommand{\bpp}{\begin{pmatrix}}  
\newcommand{\epp}{\end{pmatrix}}  
\newcommand{\bcd}{\begin{center}
\begin{tikzcd}}
\newcommand{\ecd}{\end{tikzcd} \end{center}}
\def\cL{\mathcal{L}}
\def\cO{\mathcal{O}}
\def\1{\mathbb{1}}
\def\d{{\rm d}}
\newtheorem{theorem}{Theorem}[section]
\newtheorem{proposition}{Proposition}[section]
\newtheorem{lemma}{Lemma}[section]
\newtheorem{corollary}{Corollary}[section]
\theoremstyle{remark}
\newtheorem{remark}[theorem]{Remark}
\newtheorem{definition}{Definition}[section]
\DeclareMathOperator{\End}{End}
\DeclareMathOperator{\Id}{Id}
\DeclareMathOperator{\Image}{Im}
\DeclareMathOperator{\rk}{rk}
\DeclareMathOperator{\Hom}{Hom}
\DeclareMathOperator{\tr}{tr}
\DeclareMathOperator{\Ext}{Ext}
\begin{document}
\pagestyle{plain}

\makeatletter
\@addtoreset{equation}{section}
\makeatother
\renewcommand{\theequation}{\thesection.\arabic{equation}}
\pagestyle{empty}
\rightline{IFT-UAM/CSIC-20-101}
\rightline{ROM2F/2020/02}
\vspace{0.5cm}
\begin{center}
\Huge{{A vanishing theorem for T-branes}
\\[15mm]}
\normalsize{Fernando Marchesano,$^1$ Ruxandra Moraru,$^{2}$ and Raffaele Savelli$^{3}$ \\[10mm]}
\small{
${}^1$ Instituto de F\'{\i}sica Te\'orica UAM-CSIC, Cantoblanco, 28049 Madrid, Spain \\[2mm] 
${}^2$  Department of Pure Mathematics, University of Waterloo, Canada \\[2mm] 
${}^3$ Dipartimento di Fisica, Universit\`a di Roma “Tor Vergata” \& INFN - Sezione di Roma2 \\
Via della Ricerca Scientifica, I-00133 Roma, Italy
\\[8mm]} 
\small{\bf Abstract} \\[5mm]
\end{center}
\begin{center} 
\begin{minipage}[h]{15.0cm} 

We consider regular polystable Higgs pairs $(E, \phi)$ on compact complex manifolds. We show that a non-trivial Higgs field $\phi \in H^0 ({\rm End} (E) \otimes K_S)$ restricts the Ricci curvature of the manifold, generalising previous results in the literature. In particular $\phi$ must vanish for positive Ricci curvature, while for trivial canonical bundle it must be proportional to the identity. For K\"ahler surfaces, our results provide a new vanishing theorem for solutions to the Vafa--Witten equations. Moreover they constrain supersymmetric 7-brane configurations in F-theory,
giving obstructions to the existence of T-branes, i.e.~solutions with $[\phi, \phi^\dagger] \neq 0$. When non-trivial Higgs fields are allowed, we give a general characterisation of their structure in terms of vector bundle data, which we then illustrate in explicit examples.

\end{minipage}
\end{center}
\newpage
\setcounter{page}{1}
\pagestyle{plain}
\renewcommand{\thefootnote}{\arabic{footnote}}
\setcounter{footnote}{0}


\tableofcontents


\section{Introduction}
\label{s:intro}

Vafa--Witten (VW) systems are a set of gauge-theoretic equations on a four-manifold $S$, introduced by Vafa and Witten \cite{Vafa:1994tf} in the study of the S-duality conjecture for $\CN = 4$ supersymmetric Yang–Mills theory. A key role in this analysis is played by a vanishing theorem, thanks to which for some choices of $S$ the partition function of topologically twisted $\CN = 4$ Yang–Mills simplifies, and can be computed in terms of  the Euler characteristic of the moduli space of instantons on $S$. In general such a partition function contains valuable information on four-manifold invariants, as studied in cases where the theorem does not apply, see e.g. \cite{Tanaka:2017jom,Tanaka:2017bcw} for recent progress in this direction. 

The VW equations simplify considerably when $S$ is a K\"ahler surface, and can be written in terms of a Higgs pair $(E, \phi)$, with $E$ a Hermitian vector bundle and $\phi$ a Higgs field. In this sense, they can be seen as a higher-dimensional analogue of the Hitchin equations for Higgs bundles on compact Riemann surfaces \cite{Hitchin:1986vp}. However, unlike in the generalisation made in \cite{Simpson}, in the VW case $\phi$ is twisted by the canonical bundle of $S$. 

Interestingly, the same VW equations for K\"ahler surfaces describe supersymmetric configurations of 7-branes in F-theory \cite{Donagi:2008ca,Beasley:2008dc}, which have been proposed to realise models of Grand Unification in string theory, see \cite{Heckman:2010bq,Weigand:2010wm,Wijnholt:2012fx} for reviews. In this setup, the Higgs field $\phi$ contains the transverse geometrical deformations of the 7-brane embedding, and the vanishing results of \cite{Vafa:1994tf} signal cases where a profile for $\phi$ is 
obstructed. Relevant for the analysis of \cite{Vafa:1994tf} are also cases where $\phi \neq 0$ but it has an Abelian profile, in the sense that the commutator $[\phi, \phi^\dagger]$ appearing in the VW equations vanishes. Remarkably the opposite case $[\phi, \phi^\dagger] \neq 0$,  dubbed T-brane in the string theory literature \cite{Cecotti:2010bp}, is particularly interesting from the 7-brane perspective. Indeed, T-brane configurations are a key ingredient to obtain realistic Yukawas in F-theory GUT model building \cite{Hayashi:2009bt,Cecotti:2010bp}, and as such it is an important problem to understand for which surfaces they can be realised.

In this work we study from a general perspective manifolds where T-branes are obstructed. We consider Higgs pairs $(E, \phi)$ with the same twist as in VW systems but, for the sake of generality, we do it on complex manifolds $X$ of any dimension, endowed with a Gauduchon metric so that Higgs polystability is well defined. We prove a vanishing theorem -- Theorem \ref{VT} -- in terms of the properties of the canonical bundle $K_X$. If the degree of $K_X$ is negative then $\phi$ must vanish, and if $K_X$ is trivial then $\phi$ must be proportional to the identity. Restricting our analysis to K\"ahler metrics, we find that T-branes are only allowed when $K_X$ has positive degree. 

Both our strategy and vanishing theorem are different from the ones in \cite{Vafa:1994tf}. Instead, our proof follows the philosophy of the no-go theorem of \cite{Marchesano:2017kke}, which uses the existence of $\phi$ and the stability of $(E, \phi)$ to find an obstruction in terms of $K_X$. Compared to the vanishing results in \cite{Tanaka:2017jom,Marchesano:2017kke} our present analysis is completely general, in the sense that there are no assumptions either on the manifold $X$ or on the vector bundle $E$.

For those manifolds where the Higgs field is not obstructed, one may describe its different components in terms of vector bundle data. We illustrate this procedure for rank-$2$ bundles, namely {\it a)} an extension of two line bundles and {\it b)} its twist by an ideal sheaf. When restricting the construction to K\"ahler surfaces, we are able to interpret our results in terms of 7-brane physics. From this perspective, the allowed components of $\phi$ correspond to the F-flat directions of a D7-brane superpotential in case {\it a)}, and  
of the superpotential of a D7/D3-brane bound state in case {\it b)}. 

The paper is organised as follows. In Section \ref{s:nonsplit} we review the relation of Vafa-Witten systems, T-branes and polystable Higgs pairs, as well as the vanishing results for the Higgs field in the existing literature. We also describe a simple case to which none of these partial results apply. In Section \ref{s:nogo} we derive the main theorem of this paper, the Vanishing Theorem \ref{VT}, as well as some of its consequences. In those cases where Higgs fields are allowed, they should be characterised in terms of vector bundle data, as we illustrate in Section \ref{s:sufficient}, both from a mathematical and a physics perspective. Section \ref{s:examples} describes explicit examples of these constructions, and in Section \ref{s:conclu} we draw our conclusions. 

Several technical details have been relegated to the appendices. Appendix \ref{ap:homalgebra} discusses several homological-algebra facts needed in Section  \ref{s:sufficient}. Appendix \ref{TopNonSp} discusses how the ideal-sheaf twists can be used to construct topologically non-split bundles. Appendix \ref{ap:cohomology} provides the cohomological computations needed for the constructions of Section \ref{s:examples}.

\section{T-branes in Vafa-Witten systems}
\label{s:nonsplit}

Let us consider a Vafa-Witten system \cite{Vafa:1994tf} on a compact K\"ahler surface $S$. Such a system is specified by the curvature $\mathbb{F} = d\mathbb{A} - i\mathbb{A} \wedge \mathbb{A}$ of a Hermitian vector bundle $E$, and by a (2,0)-form Higgs field $\phi$, which also transforms in the adjoint of the structure group $G_S$.\footnote{When formulating the Vafa--Witten system in terms of eqs.\eqref{VW} we fix the unit metric on the bundle $E$, under which $\mathbb{F}$, $\mathbb{A}$ are Hermitian.}  In terms of these two objects the Vafa-Witten equations read
\begin{subequations}
\label{VW}
\begin{align}
\label{Fterm1}
\bar \partial_{\mathbb{A}} \phi = &\, 0\, ,\\
\label{Fterm2}
\mathbb{F}^{(0,2)} =&\, 0\, ,\\
\label{Dterm}
\om \wedge \mathbb{F} +\frac{1}{2} [\phi, \phi^\dagger] =&\, c\, \om^2\, \Id_E  \,,
\end{align}
\end{subequations}
with $\om$ the K\"ahler two-form of $S$ and $c\in \IR$ a constant. In the string theory literature, such a system would arise from a stack of branes wrapped around $S$, hosting a super-Yang-Mills theory with a symmetry group $G_S$ \cite{Donagi:2008ca,Beasley:2008dc}. There $\phi$ contains the transverse geometrical deformations of the brane embedding, and $\mathbb{A}$ the gauge degrees of freedom. Eqs.\eqref{VW} guarantee that the system preserves supersymmetry and minimises its energy. 

In general, solutions to the above equations can be specified in terms of a {\em Higgs pair}
\be
\label{Hpair}
(E, \phi)\, , \quad \quad \phi \in H^0 \left({\rm Hom}(E, E\otimes K_S)\right)\, ,
\ee
with $K_S$ the canonical bundle of $S$. Via a Hitchin-Kobayashi correspondence, proved for this case in \cite{AlvarezConsul:2001um} (see also \cite{Tanaka:2013yji}), solutions to these equations are associated to polystable Higgs pairs.
\begin{definition}
\label{stable}
A Higgs pair $(E, \phi)$ is {\em stable} if for any non-zero coherent subsheaf $\mathcal{S}$ of $E$ such that $\rk(\mathcal{S})<\rk(E)$ and  $\phi(\mathcal{S})\subseteq \mathcal{S}\times K_S$ we have that
\be
\mu(\mathcal{S}) < \mu (E)\, ,
\label{stability}
\ee
where the {\em slope} of a torsion-free sheaf $\mathcal{E}$ is defined as $\mu(\mathcal{E}) = \deg(\mathcal{E})/\rk(\mathcal{E})$, with\footnote{We can extend these definitions to compact complex manifods $X$ of any dimension $n$, by replacing $S$ by $X$ in \eqref{Hpair} and below, and by defining the degree of a sheaf as
\[ \deg(\mathcal{E}) = \int_X c_1(\mathcal{E}) \wedge \omega^{n-1}, \]
where $\omega$ is the fundamental form of a K\"ahler metric on $X$, or Gauduchon if $X$ is non-K\"ahler. }
\[ \deg(\mathcal{E}) = \int_S c_1(\mathcal{E}) \wedge \omega. \]
\end{definition}
Replacing $<$ by $\leq$ in \eqref{stability} defines semistability, while polystable Higgs pairs amount to a direct sum of stable pairs with the same slope.\footnote{Note that subsheaves $\mathcal{S}$ of $E$ such that ${\rm rk}(\mathcal{S})={\rm rk}(E)$ always have the property that $\mu(\mathcal{S})\leq\mu(E)$. This is because the quotient $E/\mathcal{S}$ is a torsion sheaf (as it has zero rank), and the first Chern class of a torsion sheaf is dual to a positive Cartier divisor (and to the zero divisor if the support of $E/\mathcal{S}$ has codimension strictly higher than one).}

A particularly interesting subset of solutions to \eqref{VW} are those in which $ [\phi, \phi^\dagger] \neq 0$, dubbed {\em T-branes} in the string theory literature \cite{Donagi:2003hh,Cecotti:2010bp,Donagi:2011jy}. For those, as the corresponding $\phi$ is not diagonalisable, the criterion for Higgs bundle stability will be less restrictive than plain polystability for the bundle $E$, allowing for more general constructions.

A standard example where this happens is the original construction of Hitchin \cite{Hitchin:1986vp} generalised and adapted to the Vafa-Witten system. To construct it, let us consider a rank-$2$ bundle $E$ on $S$ of the split type
\be\label{split}
E = \mathcal{L}_1\oplus\mathcal{L}_2
\ee
and fix its complex structure to have $\mathbb{A}^{(0,1)} =0$. This allows to express the Higgs field $\phi$ as a two-by-two matrix such that
\be\label{NilpPhi}
\phi=\left(\begin{array}{cc}
0&m\\ 0 &0
\end{array}\right)\, ,
\ee
with $m \in H^{2,0}(S,\CL_1\otimes \CL_2^{-1})$. Higgs stability amounts to imposing that $\deg(\mathcal{L}_1) < \deg(\mathcal{L}_2)$, which is easily satisfied. Explicit solutions of this sort were constructed in \cite{Marchesano:2017kke}, where it was also pointed out a general obstruction to embedding them in arbitrary K\"ahler surfaces. Indeed, since $m$ can be seen as a section $m \in H^0 (S, \CL_1\otimes \CL_2^{-1} \otimes K_S)$ its existence implies 
\be\label{Esplit}
\exists \,m \quad \Rightarrow \quad  \deg(\mathcal{L}_1) - \deg(\mathcal{L}_2) + \deg(K_S) \geq 0\, ,
\ee
which together with the Higgs-stability condition requires that
\be\label{nogo}
\deg(K_S) > 0\, .
\ee
In particular, solutions of this type cannot be realised in surfaces of positive curvature, analogously to the case of Higgs bundles on Riemann surfaces \cite{Hitchin:1986vp}. It was shown in \cite{Marchesano:2017kke} that this no-go result applies to more general T-brane configurations, with the restriction that $E$ must be a rank-$n$ bundle of the split type. 

The condition $\deg(K_S) \geq 0$ is simple to show for non-nilpotent Higgs fields, since then at least one Casimir of $\phi$ of degree $p$ is non-vanishing. As such a Casimir can be seen as a section of $H^0(S, K_S^p)$, its existence leads to the requirement $\deg(K_S) \geq 0$.  Therefore, showing that \eqref{nogo} is a necessary condition for nilpotent T-branes leads to the following statement
\begin{equation}\label{VT-}
\boxed{\phi =0 \text{  whenever  } \deg(K_S) < 0}
\end{equation}
An interesting result in this direction has been recently obtained in \cite{Tanaka:2017jom,Tanaka:2017bcw}, in which the case of $\IC^*$-fixed Higgs pairs in projective surfaces was studied in detail. Under those assumptions, it was shown that $\phi =0$ if \eqref{nogo} is not met.

The case $\deg(K_S) = 0$ needs to be treated separately, because the above results do not exclude T-branes in general. Nevertheless, we sill show that  for any choice of bundle metric
\begin{equation}\label{VT0}
 \boxed{[\phi, \phi^\dagger] =0 \text{  for  } \deg(K_S) = 0}
\end{equation}

In this work we provide a general proof that \eqref{nogo} is a necessary condition for Vafa-Witten systems with $[\phi, \phi^\dagger] \neq 0$, extending the aforementioned results. More precisely, we will show the statements \eqref{VT-} and \eqref{VT0}. 
The proof will follow a similar strategy to the one used in \cite{Marchesano:2017kke}. For the sake of clarity we will first illustrate our approach in a simple class of Higgs pairs that falls outside the cases covered by \cite{Tanaka:2017jom,Tanaka:2017bcw,Marchesano:2017kke}. Then, in the next section we will provide the general proof.

\subsubsection*{A simple non-trivial example}
\label{ss:nonsplit}

Let us consider a Higgs pair $(E,\phi)$ such that it is Higgs (poly)stable but $E$ itself is \emph{not} (poly)stable as a bundle. Then, by the Hitchin-Kobayashi correspondence for Higgs bundles, we are guaranteed to find a unique solution that must be of T-brane type. This is because, if this solution were such that $[\phi, \phi^\dagger] = 0$, then by the standard Hitchin-Kobayashi correspondence, the bundle $E$ would itself be (poly)stable, contradicting the hypothesis. That is, we have the following

\begin{proposition}
	If $(E,\phi)$ is Higgs stable, but $E$ is not polystable, then $[\phi,\phi^\dagger] \neq 0$. 
\end{proposition}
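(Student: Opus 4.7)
The proof is essentially a one-line consequence of comparing two Hitchin-Kobayashi correspondences. My plan is to argue by contradiction: assume $[\phi,\phi^\dagger]=0$ and show that this forces $E$ to be polystable as a holomorphic bundle, contradicting the hypothesis.

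First, since $(E,\phi)$ is Higgs stable, the Hitchin-Kobayashi correspondence for Higgs pairs proved in \cite{AlvarezConsul:2001um} (see also \cite{Tanaka:2013yji}) guarantees the existence of a Hermitian metric on $E$ solving the full Vafa-Witten system \eqref{Fterm1}--\eqref{Dterm}. In particular, for this metric one has the D-term equation
\[ \omega \wedge \mathbb{F} + \tfrac{1}{2}[\phi,\phi^\dagger] = c\, \omega^2\, \Id_E. \]

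Next, I would impose the contradiction hypothesis $[\phi,\phi^\dagger]=0$. Plugging this into the D-term above eliminates the commutator term entirely, leaving precisely the Hermitian-Einstein equation
\[ \omega \wedge \mathbb{F} = c\, \omega^2\, \Id_E \]
for the bundle $E$ equipped with the given metric. Equations \eqref{Fterm1} and \eqref{Fterm2} play no further role at this stage, as they only fix the holomorphic structures on $E$ and on $\phi$.

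Finally, I would apply the classical Donaldson-Uhlenbeck-Yau theorem (in its extension to Gauduchon metrics by Li-Yau, which is the setting relevant here once one drops the K\"ahler assumption as discussed after Definition \ref{stable}): the existence of a Hermitian-Einstein metric on the holomorphic bundle $E$ implies that $E$ is polystable as a bundle. This directly contradicts the assumption that $E$ fails to be polystable, and hence the starting assumption $[\phi,\phi^\dagger]=0$ is untenable.

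The only potentially delicate point is verifying that the constant $c$ in the Higgs-pair correspondence coincides with the slope-fixing constant required by the ordinary Hermitian-Einstein equation; this is straightforward since $\tr[\phi,\phi^\dagger]=0$ forces the trace of the D-term to reproduce the degree of $E$, so $c$ is indeed proportional to $\mu(E)$ in both correspondences. No other step seems to present any real obstacle; the result is essentially a coherence statement between the Higgs and non-Higgs versions of Hitchin-Kobayashi.
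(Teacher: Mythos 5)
Your argument is correct and is essentially the paper's own proof: assume $[\phi,\phi^\dagger]=0$, note that the D-term equation supplied by the Higgs Hitchin--Kobayashi correspondence then collapses to the Hermitian--Einstein equation, and invoke the standard (Donaldson--Uhlenbeck--Yau / Li--Yau) correspondence to conclude $E$ would be polystable, contradicting the hypothesis. The remark about the constant $c$ being fixed by the trace is a fine additional check but does not change the substance.
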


On the other hand, if $E$ is (poly)stable, we cannot exclude the existence of two different solutions to the Vafa-Witten equations. One with a given hermitian metric and such that $[\phi, \phi^\dagger] = 0$, which must always exist, and a second one with a different hermitian metric and with  $[\phi, \phi^\dagger] \neq 0$. The example that we discuss below realises this circumstance.\footnote{We would like to thank R.~P.~Thomas for suggesting this example to us.}

Let $X$ be a complex manifold of any dimension, and let $\mathcal{L}_1,\mathcal{L}_2$ be two line bundles over it. Consider the following non-trivial extension
\begin{equation}\label{ext}
0\xrightarrow{}\mathcal{L}_1\xrightarrow{i} E\xrightarrow{q} \mathcal{L}_2\xrightarrow{}0\,,
\end{equation}
where $i$ is an embedding. Let us in addition consider a map $\varphi:\mathcal{L}_2\to \mathcal{L}_1\otimes K_X$. One may construct a Higgs pair $(E,\phi)$, by taking the Higgs field to be $\phi=i\,\circ\,\varphi\,\circ q:E\to E\otimes K_X$. Note that since \eqref{ext} is a complex, then $\phi^2=0$. Moreover, for the same reason, $\phi|_{\mathcal{L}_1}=\phi\,\circ\, i=0$, which implies that the only proper subbundle of $E$, i.e.~$\mathcal{L}_1$, is left invariant by $\phi$. In other words, requiring that $(E,\phi)$ is a stable Higgs pair is equivalent to requiring the standard slope-stability for $E$. 

Let us assume that $E$ is stable, that is $\mu(\mathcal{L}_1)<\mu(E)$. This in particular implies that $H^0(X,{\rm End}(E))=\mathbb{C}$, which makes it impossible to undo a $\mathbb{C}^*$ action on $\phi$ by a $\mathbb{C}^*$ action on $E$. This would only be possible if the extension \eqref{ext} is such that $E=\mathcal{L}_1\oplus \mathcal{L}_2$, because then an endomorphism of the type ${\tiny \left(\begin{array}{cc}1&0\\0&\lambda\end{array}\right)}$ with $\lambda\in\mathbb{C}^*$ would undo the action $\phi\to\lambda\phi$. In general, however, the extension \eqref{ext} leads to nilpotent, non-$\mathbb{C}^*$-fixed Higgs pairs, and falls outside the analysis of \cite{Tanaka:2017jom,Tanaka:2017bcw}.

In order to apply this construction to the Vafa-Witten equations, let us take $X$ to be a compact K\"ahler manifold $S$. Then, because the whole construction is based on two line bundles $\CL_1$ and $\CL_2$, one may formulate the derivation of the vanishing theorem for this class of examples in a pedestrian way, very similar to the reasoning for the split bundle case. Indeed, one again obtains that the stability condition implies $\deg(\mathcal{L}_1) < \deg(\mathcal{L}_2)$, while the existence of $\varphi$ requires the positivity condition \eqref{Esplit}. Putting both together one obtains
\be\label{inext}
\deg(K_S) >  \deg(\mathcal{L}_1) - \deg(\mathcal{L}_2) + \deg(K_S) \geq 0\, ,
\ee
completing the proof of the vanishing theorem for this class of Higgs pairs.

The class of examples constructed by means of the non-trivial extension \eqref{ext} gives rise to vector bundles $E$ which, \emph{topologically}, are still equivalent to $E\simeq \mathcal{L}_1\oplus\mathcal{L}_2$. This means that they can be continuously connected to a split form by deforming their holomorphic structure. If dim$X>1$, however, one can have vector bundles which are topologically inequivalent to a sum of line bundles, that even more so calls for a general proof of the vanishing theorem. One particular way to build such cases is to twist the sequence \eqref{ext} by an ideal sheaf supported on a codimension-2 locus. We will give the details of this construction in Section \ref{s:sufficient}, and proceed now with proving the theorem in full generality.

\section{The vanishing Theorem}
\label{s:nogo}

Let us now turn to the general proof of the vanishing theorem. Our strategy will be analogous to the cases discussed in the previous section. We will use the stability of the Higgs pair $(E,\phi)$ and the non-triviality of the Higgs field $\phi$ to constrain the canonical bundle $K_X$ of the complex manifold $X$. Our discussion extends trivially to polystable Higgs bundles, which can be seen as direct sums of stable Higgs bundles with the same slope. 

We first prove a preliminary result:

\begin{lemma}\label{slope}
Consider the short exact sequence of torsion-free sheaves
\[ 0 \rightarrow \mathcal{P} \rightarrow \mathcal{E} \rightarrow \mathcal{Q} \rightarrow 0\, . \]
\begin{itemize}
\item[{\it i)}]
If $\mu(\mathcal{P}) =  \mu(\mathcal{E})$, then $\mu(\mathcal{P}) = \mu(\mathcal{E}) =  \mu(\mathcal{Q})$. 
\item[{\it ii)}]
If $\mu(\mathcal{P}) <  \mu(\mathcal{E})$, then $\mu(\mathcal{P}) < \mu(\mathcal{E}) <  \mu(\mathcal{Q})$.
\end{itemize}
\end{lemma}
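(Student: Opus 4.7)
The plan is to reduce both parts of the lemma to the elementary fact that $\mu(\mathcal{E})$ is a rank-weighted average of $\mu(\mathcal{P})$ and $\mu(\mathcal{Q})$, and then read off the two inequalities. The first step is to establish the additivity of rank and of degree along the short exact sequence. Rank is plainly additive. For the degree, I would pass to the determinant line bundle, which satisfies the canonical isomorphism $\det(\mathcal{E}) \simeq \det(\mathcal{P}) \otimes \det(\mathcal{Q})$ for any short exact sequence of coherent sheaves; no torsion subtlety arises here since all three sheaves are torsion-free by hypothesis. Therefore $c_1(\mathcal{E}) = c_1(\mathcal{P}) + c_1(\mathcal{Q})$ in Bott--Chern cohomology, and pairing with $\omega^{n-1}$ (which is $dd^c$-closed on a Gauduchon manifold, so that the pairing is independent of the representatives and the degree is well defined) gives $\deg(\mathcal{E}) = \deg(\mathcal{P}) + \deg(\mathcal{Q})$.

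With these additivity properties in hand, an immediate computation gives
\begin{equation*}
\mu(\mathcal{E}) \;=\; \frac{\deg(\mathcal{P})+\deg(\mathcal{Q})}{\rk(\mathcal{P})+\rk(\mathcal{Q})} \;=\; \frac{\rk(\mathcal{P})\,\mu(\mathcal{P}) + \rk(\mathcal{Q})\,\mu(\mathcal{Q})}{\rk(\mathcal{P}) + \rk(\mathcal{Q})}.
\end{equation*}
Both ranks are strictly positive (otherwise the sequence is trivial), so $\mu(\mathcal{E})$ lies in the closed interval spanned by $\mu(\mathcal{P})$ and $\mu(\mathcal{Q})$, and in its strict interior as soon as the two endpoints differ. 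This convexity identity yields both claims at once. For \emph{(i)}, substituting $\mu(\mathcal{P})=\mu(\mathcal{E})$ into the weighted-average formula forces $\mu(\mathcal{Q})=\mu(\mathcal{E})$. For \emph{(ii)}, solving the same identity for $\mu(\mathcal{Q})$ in the form
\begin{equation*}
\mu(\mathcal{Q}) \;=\; \mu(\mathcal{E}) + \frac{\rk(\mathcal{P})}{\rk(\mathcal{Q})}\bigl(\mu(\mathcal{E}) - \mu(\mathcal{P})\bigr)
\end{equation*}
shows that $\mu(\mathcal{P})<\mu(\mathcal{E})$ implies $\mu(\mathcal{Q})>\mu(\mathcal{E})$, giving the full chain.

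I do not foresee a real obstacle in carrying this out; the argument is essentially the one-line convexity observation once additivity is granted. The only point that deserves a brief sanity check is the additivity of $c_1$ under extensions in the non-K\"ahler (Gauduchon) setting, which is standard and is precisely why the Gauduchon assumption was imposed in the definition of degree in the first place.
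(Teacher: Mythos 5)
Your proof is correct and follows essentially the same route as the paper's: both rest on additivity of rank and degree along the short exact sequence and then on the rank-weighted average identity for $\mu(\mathcal{E})$ (the paper writes it as $\rk(\mathcal{P})(\mu(\mathcal{E}) - \mu(\mathcal{P})) + \rk(\mathcal{Q})(\mu(\mathcal{E}) - \mu(\mathcal{Q})) = 0$, which is just a rearrangement of your convexity formula). The only cosmetic difference is that you spell out the determinant-bundle justification for degree additivity, which the paper takes as given.
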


\begin{proof}
Applying the definitions of degree and slope of torsion-free sheaves we have that
 $\deg(\mathcal{E}) = \deg(\mathcal{P}) + \deg(\mathcal{Q})$ 
and that $\rk(\mathcal{E}) = \rk(\mathcal{P}) + \rk(\mathcal{Q})$. Moreover, we have that
\begin{equation}\label{star}
\rk(\mathcal{P})(\mu(\mathcal{E}) - \mu(\mathcal{P})) + \rk(\mathcal{Q})(\mu(\mathcal{E}) - \mu(\mathcal{Q})) = 0\,.
\end{equation}
Now, because  $\mathcal{P}$ and $\mathcal{Q}$ are torsion-free sheaves, $\rk(\mathcal{P}), \rk(\mathcal{Q}) > 0$. Using  \eqref{star}, we have that if $\mu(\mathcal{P}) =  \mu(\mathcal{E})$, then $\rk(\mathcal{Q})(\mu(\mathcal{E}) - \mu(\mathcal{Q})) = 0$, 
implying that $\mu(\mathcal{E}) =  \mu(\mathcal{Q})$. 
If on the contrary we assume that $\mu(\mathcal{P}) <  \mu(\mathcal{E})$. Then, by \eqref{star},
\[  \rk(\mathcal{Q})(\mu(\mathcal{E}) - \mu(\mathcal{Q})) = - \rk(\mathcal{P})(\mu(\mathcal{E}) - \mu(\mathcal{P})) < 0\]
implying that $\mu(\mathcal{E}) <  \mu(\mathcal{Q})$.
\end{proof}

With this result, we now prove the statement \eqref{VT-} for a general complex manifold.

\begin{proposition}\label{VTK-}
The existence of a stable Higgs pair $(E,\phi)$ on a complex manifold $X$ with $\phi \neq 0$ implies that $\deg K_X \geq 0$.
\end{proposition}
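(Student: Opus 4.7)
The plan is to mimic the pedestrian argument used for the non-split extension example: exploit the existence of a non-zero $\phi$ together with Higgs-stability of $(E,\phi)$ to produce two $\phi$-invariant proper subsheaves of $E$ whose slopes, combined via Lemma \ref{slope}, squeeze $\mu(E)$ and force $\deg K_X \geq 0$. Since the construction of the second subsheaf will require $\ker\phi$ to be non-zero, I would first split into two cases according to whether $\phi$ has full generic rank.

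In the easy case $\ker\phi=0$, the Higgs field $\phi : E \hookrightarrow E\otimes K_X$ is an injection of torsion-free sheaves of the same rank. Taking determinants then produces a global section of $K_X^{\rk(E)}$, non-vanishing because $\phi$ is generically an isomorphism of fibres. Since a line bundle on a compact complex manifold admitting a non-trivial holomorphic section has non-negative degree (its vanishing locus is an effective divisor), this already gives $\deg K_X \geq 0$.

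The substantial case is $\ker\phi\neq 0$. Here $\mathcal{K}:=\ker\phi$ is a non-zero proper $\phi$-invariant subsheaf, so Higgs-stability yields $\mu(\mathcal{K}) < \mu(E)$, and Lemma \ref{slope}(ii) applied to $0\to\mathcal{K}\to E\to E/\mathcal{K}\to 0$ gives $\mu(E/\mathcal{K}) > \mu(E)$. Next I would exploit the isomorphism $E/\mathcal{K}\cong \Image(\phi) =:\mathcal{I}\subset E\otimes K_X$ to build a second $\phi$-invariant subsheaf of $E$, namely $\mathcal{I}\otimes K_X^{-1}$. This is a proper subsheaf because $\rk(\mathcal{I})=\rk(E)-\rk(\mathcal{K})<\rk(E)$, and it is $\phi$-invariant because
\[ \phi\bigl(\mathcal{I}\otimes K_X^{-1}\bigr) \;\subseteq\; \phi(E) \;=\; \mathcal{I} \;=\; \bigl(\mathcal{I}\otimes K_X^{-1}\bigr)\otimes K_X. \]
Stability then forces $\mu(\mathcal{I}\otimes K_X^{-1}) = \mu(E/\mathcal{K})-\deg K_X < \mu(E)$, and combining this with the earlier inequality sandwiches
\[ \mu(E) \;<\; \mu(E/\mathcal{K}) \;<\; \mu(E) + \deg K_X, \]
so $\deg K_X > 0$ follows at once.

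The main obstacle I expect to encounter, and the step I would verify most carefully, is the construction of the second subsheaf $\mathcal{I}\otimes K_X^{-1}$ together with its $\phi$-invariance: it requires $\Image(\phi)$ to be a genuine torsion-free subsheaf of $E\otimes K_X$ (so that its twist by $K_X^{-1}$ is a torsion-free subsheaf of $E$ to which the slope formalism and Lemma \ref{slope} apply), and the containment $\phi(\mathcal{I}\otimes K_X^{-1})\subseteq \mathcal{I}$ must be read off as a map of coherent sheaves rather than at the level of formal tensor symbols. Once these points are settled, the two cases combine to give $\deg K_X \geq 0$, with strict positivity as soon as $\phi$ fails to have full generic rank.
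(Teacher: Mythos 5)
Your proof is correct, and its engine is the same as the paper's: $\ker\phi$ and $\Image\phi$ are the invariant subsheaves, and stability combined with Lemma \ref{slope} squeezes $\mu(E) < \mu(\Image\phi) < \mu(E) + \deg K_X$ when $\ker\phi \neq 0$. There are two genuine, if modest, differences. First, where the paper passes to the twisted pair $(E\otimes K_X,\, \phi\otimes\Id_{K_X})$ and treats $\Image\phi$ as an invariant subsheaf of $E\otimes K_X$, you twist the image down by $K_X^{-1}$ and apply stability of $(E,\phi)$ directly; these are equivalent, since tensoring by a line bundle shifts every slope by $\deg K_X$ and preserves both invariance and rank, and your version has the small virtue of never needing the auxiliary pair. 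Second, your case split is on $\ker\phi$ rather than on whether $\mu(\Image\phi)=\mu(E\otimes K_X)$, and you dispatch the injective case with the section $\det\phi \in H^0(K_X^{\rk E})$ instead of staying inside the slope formalism. That determinant argument is the more elementary route (it is the same mechanism the paper invokes in Section \ref{s:nonsplit} for non-nilpotent $\phi$ via Casimirs, and it reappears in the proof of Corollary \ref{iso}); it does require the standard fact that a line bundle with a non-trivial holomorphic section has non-negative Gauduchon degree, which the paper also uses implicitly. The torsion-freeness points you flag ($\Image\phi$ torsion-free as a subsheaf of the locally free $E\otimes K_X$, hence $E/\ker\phi$ torsion-free, so Lemma \ref{slope} applies) all go through, and the two cases together give exactly $\deg K_X \geq 0$, with strict positivity whenever $\phi$ is not injective — matching the inequality \eqref{ineq} recorded after the paper's proof.
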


\begin{proof}
 Note that the Higgs field $\phi: C^\infty(E) \rightarrow C^\infty(E \otimes K_X)$ of $E$ induces a Higgs field $\phi' = \phi \otimes \Id_{K_X} : C^\infty(E \otimes K_X) \rightarrow C^\infty((E \otimes K_X) \otimes K_X)$ of $E \otimes K_X$. Moreover, $P$ is a $\phi$-invariant subsheaf of $E$ if and only if $P \otimes K_X$ is a $\phi'$-invariant subsheaf of $E \otimes K_X$. This then implies that $(E,\phi)$ is stable if and only if $(E \otimes K_X,\phi')$ is stable. Since we are assuming that $(E,\phi)$ is stable, so is $(E \otimes K_X,\phi')$. Moreover, by assumption $\ker \phi$ is a $\phi$-invariant subsheaf of $E$ and $\Image \phi$ is a non-vanishing $\phi'$-invariant subsheaf of $E \otimes K_X$. Note that $\rk(E)=\rk(E \otimes K_X)$, and that
\begin{equation}\label{slopeq}
\mu(E \otimes K_X)=\mu(E)+\deg(K_X)\,.
\end{equation}

Let us first assume that $\mu(\Image \phi) = \mu(E \otimes K_X)$. Because of the stability of $(E \otimes K_X,\phi')$, this can only be true if $\rk(\Image \phi)=\rk(E \otimes K_X)=\rk(E)$. Since $\Image\phi\simeq E/\ker\phi$ and $\ker\phi$ is free of torsion, this implies that $\ker\phi=\{0\}$, and thus $\phi$ is injective. Hence we have that $E \simeq \Image \phi$ and so $\mu(E) = \mu(\Image \phi)$, implying by \eqref{slopeq} that $\deg(K_X)=0$. 

If on the contrary $\mu(\Image \phi) \neq \mu(E \otimes K_X)$, then necessarily $\mu(\Image \phi)<\mu(E \otimes K_X)$, again by stability of $(E \otimes K_X,\phi')$ and $\phi'$-invariance of $\Image \phi$. Given the short exact sequence
\be\label{KerImSeq}
 0 \rightarrow \ker \phi \rightarrow E \rightarrow E/\ker \phi \simeq \Image \phi \rightarrow 0\,,
 \ee
we consider two cases. If $\ker \phi = 0$, then $E \simeq \Image \phi$ and $\mu(E) = \mu(\Image \phi)$.
Whereas if $\ker \phi \neq 0$ then, it is a proper (because $\phi$ is non-identically vanishing) $\phi$-invariant subsheaf of $E$ and, by stability of $(E,\phi)$, $\mu(\ker \phi) < \mu(E)$. Applying Lemma \ref{slope} to the above sequence we have that $\mu(E) <  \mu(\Image \phi)$. 
Putting all this together we obtain
\[ \mu(E) \leq \mu(\Image \phi) < \mu(E) + \deg K_X\,, \]
where the equality holds iff $\phi$ is injective. As a result  $\deg K_X > 0$.
\end{proof}

In fact, one can say more about the case of degree-zero canonical bundle by showing the following:
\begin{corollary}\label{iso}
For $(E,\phi)$ stable with $\phi \neq 0$, we have $\deg K_X=0$ if and only if $\phi$ is an isomorphism.
\end{corollary}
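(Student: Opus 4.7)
The plan is to prove the two implications separately. For the direction ``$\phi$ isomorphism $\Rightarrow \deg K_X=0$'', I would take determinants of $\phi\colon E\to E\otimes K_X$ to obtain an isomorphism of line bundles $\det E\cong \det E\otimes K_X^r$ where $r=\rk E$. This forces $K_X^r\cong \mathcal{O}_X$, and hence $\deg K_X=0$.

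For the non-trivial direction, I would first revisit the proof of Proposition~\ref{VTK-}. Its chain of inequalities, in the subcase $\mu(\Image\phi)<\mu(E\otimes K_X)$, reads $\mu(E)\le \mu(\Image\phi) < \mu(E)+\deg K_X$. If $\deg K_X=0$, this would degenerate to $\mu(E)<\mu(E)$, a contradiction, so we are forced into the other subcase $\mu(\Image\phi)=\mu(E\otimes K_X)$. As already established there, the stability of $(E\otimes K_X,\phi')$ together with this equality of slopes forces $\rk(\Image\phi)=\rk(E\otimes K_X)=\rk(E)$, so $\ker\phi=0$ and $\phi$ is injective as a map of sheaves.

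The remaining step is to upgrade this sheaf-level injectivity to a pointwise isomorphism. I would pass to determinants once more: the map $\phi$ between locally free sheaves of equal rank induces a global section $\det\phi\in H^0(X,K_X^r)$, which is not identically zero since $\phi$ has full generic rank. Since $\deg K_X^r=r\deg K_X=0$ and $\det\phi\ne 0$, I would invoke the standard fact that on a compact Hermitian manifold a non-trivial holomorphic section of a degree-zero line bundle is nowhere vanishing (its zero locus, pulled back to a divisor, would be effective and contribute strictly positively to the degree unless empty, as already exploited in footnote~2 and in \eqref{Esplit}). Therefore $\det\phi$ has no zeros, so $\phi_x\colon E_x\to(E\otimes K_X)_x$ is a linear isomorphism at every point and $\phi$ is an isomorphism of vector bundles.

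The main subtlety lies in this final step. Sheaf-level injectivity alone does not rule out pointwise degeneracy: the cokernel $(E\otimes K_X)/\Image\phi$ is a torsion sheaf of vanishing degree (using $\mu(\Image\phi)=\mu(E\otimes K_X)$), but it could still be non-trivial if supported in codimension $\ge 2$, so slope considerations alone cannot close the argument. The degree-zero line bundle fact applied to $\det\phi\in H^0(X,K_X^r)$ is what rules this out cleanly, and is the one input genuinely beyond the proof of Proposition~\ref{VTK-}.
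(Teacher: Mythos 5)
Your proposal is correct and follows essentially the same route as the paper: both directions match, and in particular your final step (a non-trivial section $\det\phi$ of the degree-zero line bundle $K_X^{\rk E}$ is nowhere vanishing, hence $\phi$ is a pointwise isomorphism) is exactly the paper's argument that the torsion quotient $\det E\otimes K_X^{\rk E}/\Image\det\phi$ has vanishing rank and first Chern class and is therefore trivial. You also correctly identify, as the paper does, that sheaf-level injectivity alone does not suffice and that the determinant is the right tool to rule out a codimension-$\ge 2$ cokernel.
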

\begin{proof}
If $\phi$ is an isomorphism then $E\simeq E\otimes K_X$, and by Eq. \eqref{slopeq} it follows that $\deg K_X=0$. To prove the other direction note that when $\deg K_X=0$, then $\mu(\Image \phi) = \mu(E \otimes K_X)$, as follows from the proof of Proposition \ref{VTK-}. Therefore, by stability, $\rk(\Image \phi)=\rk(E \otimes K_X)$. Nevertheless, one must still show that $\phi$ has a vanishing cokernel. For this purpose, consider the injective map between line bundles $\det\phi:\det E\to\det E\otimes K_X$. This map is also surjective, because the torsion sheaf $\det E\otimes K_X/\Image\det\phi$ has both vanishing rank and first Chern class.\footnote{We are implicitly using here that $\det\Image\phi\simeq\Image\det\phi$, but this is clearly true when $\phi$ is injective. Indeed, on the one hand injectivity of $\phi$ implies $\Image\phi\simeq E$ and hence $\det\Image\phi\simeq\det E$. On the other hand, injectivity of $\det\phi$ implies that $\Image\det\phi\simeq\det E$.} Therefore $\det\phi$ is an isomorphism, implying that $\phi$ is as well \cite{takemoto}.
\end{proof}

The structure of the Higgs field in the case of zero-degree canonical bundle is very simple, and leads to a trivial commutator $[\phi,\phi^\dagger]$ for any choice of bundle metric. To see this, notice that in the particular case of trivial $K_X$, a Higgs pair is given by $(E,\varphi)$ with $\varphi\in{\rm End}(E)$. One may in fact consider such a pair, which we will dub {\it untwisted} Higgs pair, for a general complex surface $X$, and define slope stability analogously to Definition \ref{stable}. We can then show the following result, which parallels Schur's Lemma:

\begin{lemma}\label{lambdaId}
Let $(E,\varphi)$ with $\varphi\in{\rm End}(E)$ a stable untwisted pair. Then, $\varphi=\lambda I$ with $\lambda\in\mathbb{C}$.
\end{lemma}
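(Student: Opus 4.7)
My plan is to adapt the classical Schur-type argument to the untwisted stable pair setting, using Lemma \ref{slope} as the key tool to rule out a genuine eigenspace decomposition.

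The starting observation is that because $X$ is compact (and connected), the determinant of any endomorphism of $E$ is a holomorphic section of $\mathrm{End}(\det E)\simeq\mathcal{O}_X$, hence a constant. The same applies to the coefficients of the characteristic polynomial of $\varphi$, which are therefore constants in $\mathbb{C}$. Since $\mathbb{C}$ is algebraically closed, this characteristic polynomial admits a root $\lambda\in\mathbb{C}$, and the endomorphism $\psi:=\varphi-\lambda\,\Id_E$ then satisfies $\det\psi=0$ as a constant. I will then argue that $\psi$ must vanish identically, which immediately yields $\varphi=\lambda\,\Id_E$.

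Assume for contradiction that $\psi\neq 0$. Because $\det\psi=0$ pointwise, $\ker\psi$ is a non-zero coherent subsheaf of $E$, and because $E$ is torsion-free while $\psi\not\equiv 0$, one must have $0<\rk(\ker\psi)<\rk(E)$. Since $\psi$ commutes with $\varphi$, both $\ker\psi$ and $\Image\psi$ are $\varphi$-invariant subsheaves of $E$; moreover, $\Image\psi$ is torsion-free as a subsheaf of the torsion-free sheaf $E$, and it is again proper in $E$ (non-zero because $\psi\neq 0$, and of rank strictly less than $\rk(E)$ because $\rk(\ker\psi)\geq 1$).

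Now I can apply stability of $(E,\varphi)$ to $\ker\psi$, obtaining $\mu(\ker\psi)<\mu(E)$. Feeding this inequality into part (ii) of Lemma \ref{slope}, applied to the short exact sequence
\[
0\to \ker\psi\to E\to E/\ker\psi\simeq \Image\psi\to 0,
\]
yields $\mu(\Image\psi)>\mu(E)$. But stability of $(E,\varphi)$ applied to the proper $\varphi$-invariant subsheaf $\Image\psi$ gives $\mu(\Image\psi)<\mu(E)$, a direct contradiction. Hence $\psi=0$ and $\varphi=\lambda\,\Id_E$.

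The main technical point to be careful about is the last step: I need to know that $\Image\psi$ is genuinely a torsion-free subsheaf with $0<\rk(\Image\psi)<\rk(E)$ so that Definition \ref{stable} applies, which follows once $\psi$ is assumed non-zero and $\ker\psi$ is non-zero. Everything else is a short bookkeeping argument resting on compactness of $X$, algebraic closedness of $\mathbb{C}$, and Lemma \ref{slope}.
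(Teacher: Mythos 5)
Your argument is correct and rests on the same ingredients as the paper's proof: algebraic closedness of $\mathbb{C}$ (via the constant characteristic polynomial on compact $X$) to produce $\lambda$, the kernel--image exact sequence for $\psi=\varphi-\lambda\,\Id_E$, Lemma \ref{slope}, and stability applied to the two $\varphi$-invariant subsheaves $\ker\psi$ and $\Image\psi$. The only difference is organizational: the paper first shows that any nonzero such endomorphism must be an automorphism and then rules this out for $\lambda I-\varphi$, whereas you extract the contradiction on the slopes of $\ker\psi$ and $\Image\psi$ in a single step.
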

\begin{proof}
Suppose $\varphi$ is not identically vanishing, so that $\{0\}\neq\Image\varphi\subseteq E$. On the one hand, since $\Image\varphi$ is a $\varphi$-invariant subsheaf of $E$ and the pair $(E,\varphi)$ is stable, $\mu(\Image\varphi)\leq\mu(E)$ and the equality holds only if $\rk(\Image\varphi)=\rk(E)$. On the other hand, since $\Image\varphi\simeq E/\ker\varphi$, independently of what $\ker\varphi$ may be, we have that $\mu(E)\leq\mu(\Image\varphi)$. This implies that $\mu(\Image\varphi)=\mu(E)$, and thus that $\rk(\Image\varphi)=\rk(E)$, meaning that $\varphi$ is injective. Since the domain of $\varphi$ coincides with its codomain, injectivity implies surjectivity, and hence $\varphi$ is an automorphism. Moreover, being $\mathbb{C}$ algebraically closed, there exists $\lambda\in\mathbb{C}^*$ such that $\ker(\lambda I-\varphi)\neq\{0\}$. Also, $(E,\varphi)$ is a stable pair if and only if  $(E,\lambda I-\varphi)$ is a stable pair, because for any subsheaf $\mathcal{P}$ of $E$, we have $\varphi(\mathcal{P})\subseteq \mathcal{P}$ if and only if $(\lambda I-\varphi)(\mathcal{P})\subseteq \mathcal{P}$. Therefore, applying again the reasoning above to the map $\lambda I-\varphi:E\to E$, we conclude that $\varphi-\lambda I\equiv0$, since it cannot be an automorphism.
\end{proof}


Let us now focus on K\"ahler manifolds. From the above result we derive the following:

\begin{proposition}\label{Deg0}
Let $(E,\phi)$ be a stable Higgs pair on a K\"ahler manifold $X$ with $\deg K_X = 0$. Then $[\phi,\phi^\dagger] = 0$ for any choice of bundle metric.
\end{proposition}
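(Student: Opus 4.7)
The plan is to reduce the twisted setting to the untwisted Schur-type statement of Lemma \ref{lambdaId} and thereby show that $\phi$ is pointwise a scalar multiple of $I_E$; once this is established, the vanishing $[\phi,\phi^\dagger]=0$ for any bundle metric follows immediately, since any scalar multiple of $I_E$ commutes with its Hermitian adjoint for every Hermitian structure on $E$.

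If $\phi=0$ the claim is trivial, so I assume $\phi\neq 0$. By Corollary \ref{iso}, $\phi:E\to E\otimes K_X$ is an isomorphism; taking determinants, $\det\phi$ produces an isomorphism $\det E\to\det E\otimes K_X^r$, which upon tensoring with $(\det E)^{-1}$ yields a holomorphic trivialization $\mathcal{O}_X\cong K_X^r$. Hence $K_X$ is a torsion class in $\mathrm{Pic}(X)$. To replace $\phi$ by an honest endomorphism, I would then pass to the cyclic \'etale cover $\pi:\tilde X\to X$ of degree dividing $r$ for which $\pi^*K_X\cong\mathcal{O}_{\tilde X}$: in such a trivialization $\pi^*\phi$ becomes a section of $\End(\pi^*E)$, and the pulled-back pair $(\pi^*E,\pi^*\phi)$ is polystable (since polystability is preserved under finite \'etale pullback). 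Decomposing it into stable untwisted summands and applying Lemma \ref{lambdaId} summand-by-summand, I conclude that $\pi^*\phi$ is pointwise a scalar multiple of the identity.

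The condition that $\phi(x) \in K_X|_x\cdot I_{E_x}$ at every $x$ is intrinsic, and hence descends to $X$: this gives $\phi = s\cdot I_E$ for some section $s\in H^0(K_X)$, necessarily nowhere-vanishing since $\phi$ is an isomorphism. Consequently $\phi^\dagger = \bar s\cdot I_E$ for any Hermitian metric on $E$, and $[\phi,\phi^\dagger] = |s|^2[I_E,I_E] = 0$ identically.

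The principal technical obstacle lies in this final descent step: although the pointwise-scalar property itself is intrinsic, one must verify that the scalars attached to the (possibly $\mathrm{Gal}(\tilde X/X)$-permuted) stable summands of $\pi^*E$ fit together $\mathrm{Gal}$-equivariantly into a well-defined global section $s\in H^0(K_X)$ on $X$, rather than merely producing a $\mathrm{Gal}$-orbit of local scalars.
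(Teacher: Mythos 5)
Your route is the same as the paper's in outline --- Corollary \ref{iso} to make $\phi$ an isomorphism, determinants to get $K_X^{\rk E}\simeq \mathcal{O}_X$ and hence torsion $K_X$, the cyclic cover trivialising $K_X$, pullback preserving polystability, and Lemma \ref{lambdaId} on the stable summands --- but there is a genuine gap at the sentence ``applying Lemma \ref{lambdaId} summand-by-summand, I conclude that $\pi^*\phi$ is pointwise a scalar multiple of the identity.'' What the lemma actually yields is only that $\pi^*\phi$ is block-diagonal with respect to the splitting of $\pi^* E$ into stable summands, the $i$-th block being $\lambda_i\,\Id$, and nothing in your argument forces the $\lambda_i$ to coincide. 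Indeed, when $K_X$ has exact order $r>1$ in ${\rm Pic}(X)$ they \emph{cannot} all be equal and nonzero: the trivialising section of $\pi^*K_X$ transforms under the Galois generator by a primitive $r$-th root of unity $\zeta$, so equivariance of $\pi^*\phi$ forces the multiset $\{\lambda_i\}$ to be invariant under $\lambda\mapsto\zeta\lambda$, and any nonzero eigenvalue drags along its full $\zeta$-orbit. Your conclusion $\phi = s\,\Id_E$ with $s\in H^0(K_X)$ would therefore force $\phi\equiv 0$ whenever $K_X$ is a nontrivial torsion bundle (a degree-zero line bundle with a nonzero section is trivial, so $H^0(K_X)=0$ in that case) --- a statement strictly stronger than the proposition, and one that fails: for instance, on an Enriques surface with K3 cover $\pi$ and involution $\sigma$, descending $F\oplus\sigma^*F$ (with $F$ a line bundle, $\sigma^*F\not\simeq F$) together with $\pi^*\phi = {\rm diag}(\lambda,-\lambda)$ produces a stable Higgs pair with $\phi\neq 0$ that is nowhere a scalar. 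So pointwise scalarity of $\pi^*\phi$ is simply not available, and the step fails.

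The paper's proof sidesteps exactly this point: it stops at the block-diagonal form with each block proportional to the identity, and descends only the pointwise vanishing of the commutator, $\pi^*([\phi,\phi^\dagger]) = [\pi^*\phi,(\pi^*\phi)^\dagger] = 0$, never asserting a globally defined scalar $s$ on $X$. Relatedly, the ``principal technical obstacle'' you flag at the end is a red herring: had pointwise scalarity been true, no Galois-equivariant gluing would be needed, since $s = \tfrac{1}{\rk E}\tr\phi$ is already a globally defined holomorphic section of $K_X$ on $X$ itself, and scalarity can be tested after pullback because $\pi$ is surjective. The real difficulty sits one step earlier, in the possible inequality of the $\lambda_i$ across summands, and your argument as written does not close it; it is complete only in the genuinely trivial case $K_X\simeq\mathcal{O}_X$, where the pair stays stable, no cover is needed, and Lemma \ref{lambdaId} applies directly to give a single scalar.
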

\begin{proof}
Corollary \ref{iso} implies that $\phi$ is either zero or an isomorphism. The vanishing of the commutator for the case where $X$ is a Calabi-Yau CY (i.e.~$K_X\simeq\mathcal{O}_{\rm CY}$) follows from Lemma \ref{lambdaId}. All other cases with zero-degree canonical bundle are such that $c_1(K_X)$ is a torsion class of $H^2(X, \IZ)$, and hence that $K_X^r\simeq\mathcal{O}_X$ for some integer $r>1$. We may then define a $r:1$ covering map $\pi \colon {\rm CY} \to X$, with $\pi^* K_X = \mathcal{O}_{\rm CY}$. Just like for bundles \cite{kempf}, by the Hitchin-Kobayashi correspondence \cite{AlvarezConsul:2001um} $\pi^*$ preserves polystability of Higgs pairs and sends $(E, \phi)$ to a slope-polystable Higgs pair $(\pi^*E, \pi^*\phi)$ in CY.\footnote{For instance, for the case of a surface $S$, $\pi^*$ pulls-back the Vafa-Witten system and sends solutions to eqs.\eqref{VW} in $S$ to solutions in $K3$. Then, by the Hitchin-Kobayashi correspondence $(\pi^*E, \pi^*\phi)$ must be a slope-polystable Higgs pair in $K3$.} Assuming that $\phi \neq 0$ in $X$ and applying our above result for CY we find that $\pi^*\phi$ must be block-diagonal, with each block proportional to the identity. Therefore $\pi^*([\phi,\phi^\dagger]) =[\pi^*\phi, (\pi^*\phi)^\dag)] =  0$ pointwise, and so we must necessarily have that $[\phi,\phi^\dagger]=0$.
\end{proof}

Notice that when $K_X$ is trivial this result applies to any complex manifold $X$. In fact, we would expect Proposition \ref{Deg0} to also hold for non-K\"ahler, complex manifolds with a Gauduchon metric and $\deg K_X = 0$. This would be guaranteed by the appropriate generalisation of the Hitchin-Kobayashi correspondence to Gauduchon metrics \cite{LubkeTeleman}.

Putting together Propositions \ref{VTK-} and \ref{Deg0}, we obtain the vanishing theorem:

\begin{theorem}\label{VT}
A polystable Higgs pair $(E,\phi)$ on a complex manifold $X$ with $\phi\neq0$ implies that $\deg K_X \geq 0$. If $K_X$ is trivial then $[\phi,\phi^\dagger] = 0$ for any choice of bundle metric.
\end{theorem}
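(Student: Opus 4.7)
The plan is to synthesise the two propositions already established with the standard reduction from polystable to stable Higgs pairs. Since a polystable Higgs pair decomposes as a direct sum $(E,\phi) = \bigoplus_i (E_i, \phi_i)$ of stable Higgs pairs with common slope $\mu(E_i) = \mu(E)$, I would apply the earlier results summand by summand and then reassemble.

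First, for the inequality $\deg K_X \geq 0$, I would observe that $\phi \neq 0$ forces at least one component $\phi_j$ to be non-vanishing on its stable summand $(E_j, \phi_j)$. Applying Proposition \ref{VTK-} directly to this stable pair gives the bound without further computation.

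Next, for the commutator statement under the hypothesis $K_X \simeq \mathcal{O}_X$, I would note that each $\phi_i$ becomes an element of $H^0(\End(E_i))$. On summands where $\phi_i$ vanishes there is nothing to check, while on summands where $\phi_i \neq 0$, Corollary \ref{iso} forces $\phi_i$ to be an isomorphism and Lemma \ref{lambdaId} then yields $\phi_i = \lambda_i \Id_{E_i}$ for some $\lambda_i \in \mathbb{C}$. A scalar multiple of the identity commutes with its adjoint for any Hermitian metric on the summand, so $[\phi_i, \phi_i^\dagger] = 0$ on each $E_i$ and hence $[\phi, \phi^\dagger] = 0$ on $E$.

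The main obstacle I anticipate is reconciling two mismatches between Theorem \ref{VT} and the supporting results: Theorem \ref{VT} is stated for general complex manifolds and for polystable pairs, whereas Proposition \ref{Deg0} is stated for K\"ahler manifolds and for stable pairs, and the phrase "for any choice of bundle metric" is potentially ambiguous when the chosen metric does not respect the polystable splitting. Both issues are disposed of once we assume $K_X$ trivial: the K\"ahler hypothesis was only needed in Proposition \ref{Deg0} to construct the auxiliary covering $\pi\colon \mathrm{CY} \to X$ handling the torsion-but-nontrivial canonical class, a case that is excluded here, and $\phi$ being block-scalar with respect to the canonical polystable decomposition makes the vanishing of $[\phi, \phi^\dagger]$ summand-by-summand insensitive to the choice of Hermitian metric on each factor.
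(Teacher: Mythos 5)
Your proposal is correct and follows essentially the same route as the paper: the proof of Theorem \ref{VT} there is precisely the combination of Proposition \ref{VTK-} and Proposition \ref{Deg0}, together with the (stated-as-trivial) reduction of a polystable pair to a direct sum of stable pairs of equal slope. Your observation that the K\"ahler hypothesis of Proposition \ref{Deg0} is only needed for the covering-space argument when $c_1(K_X)$ is torsion but non-zero, and is therefore vacuous when $K_X$ is trivial, is exactly the remark the paper makes immediately after that proposition.
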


Finally, restricting to the case of T-branes in Vafa-Witten systems we obtain:

\begin{corollary}\label{VTC}
A polystable Higgs pair $(E,\phi)$ on a K\"ahler surface $S$ with $[\phi,\phi^\dagger] \neq 0$ implies that $\deg K_S > 0$. In particular, $S$ has to be properly elliptic or of general type.
\end{corollary}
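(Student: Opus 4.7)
The plan is to upgrade the inequality $\deg K_S\geq 0$ of Theorem~\ref{VT} to the strict inequality $\deg K_S>0$ by excluding the borderline case, and then to combine strict positivity with the Enriques--Kodaira classification of compact K\"ahler surfaces.

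First, from $[\phi,\phi^\dagger]\neq 0$ it is immediate that $\phi\neq 0$, so Theorem~\ref{VT} applies and yields $\deg K_S\geq 0$. To exclude the equality case, I would decompose the polystable pair into stable summands of equal slope $(E,\phi)=\bigoplus_i(E_i,\phi_i)$ and apply Proposition~\ref{Deg0} to each summand separately. Under the assumption $\deg K_S=0$, every $[\phi_i,\phi_i^\dagger]$ would vanish, and hence so would their direct sum $[\phi,\phi^\dagger]=\bigoplus_i[\phi_i,\phi_i^\dagger]$, contradicting the hypothesis. This establishes $\deg K_S>0$.

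For the classification step, I would invoke the Enriques--Kodaira theorem, which organises compact K\"ahler surfaces according to their Kodaira dimension $\kappa\in\{-\infty,0,1,2\}$. The $\kappa=0$ class (K3 surfaces, complex tori, Enriques and bielliptic surfaces) has torsion canonical class in the N\'eron--Severi group, so $\deg K_S=0$ identically, ruling it out. The $\kappa=-\infty$ class (rational and ruled surfaces) is excluded because $K_S$ fails to be pseudo-effective, obstructing positivity against the K\"ahler class singled out by the stability of $(E,\phi)$. Only $\kappa=1$ (properly elliptic) and $\kappa=2$ (general type) remain, yielding the statement.

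The step I expect to be most delicate is the exclusion of $\kappa=-\infty$, since certain non-rational ruled surfaces admit K\"ahler classes against which $K_S$ has positive degree. The careful point is to verify that no such class is consistent with the specific K\"ahler polarisation used in the polystability condition for $(E,\phi)$, thereby matching the positive-curvature intuition already developed in \cite{Hitchin:1986vp,Marchesano:2017kke}.
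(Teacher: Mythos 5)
Your first step is correct and is exactly the paper's (implicit) route to the inequality: $[\phi,\phi^\dagger]\neq 0$ forces $\phi\neq 0$, Proposition \ref{VTK-} then gives $\deg K_S\geq 0$, and Proposition \ref{Deg0} applied to the stable summands of the polystable decomposition excludes $\deg K_S=0$. (The identity $[\phi,\phi^\dagger]=\bigoplus_i[\phi_i,\phi_i^\dagger]$ presumes a bundle metric compatible with the splitting, which is the one furnished by the Hitchin--Kobayashi correspondence, so this is harmless.) Nothing further is needed for $\deg K_S>0$, and here your argument matches the paper's.

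The classification clause is where there is a genuine gap, and it sits precisely at the step you flag as delicate but do not close. Your proposed repair --- that the ``specific K\"ahler polarisation used in the polystability condition'' excludes $\kappa=-\infty$ --- cannot work, because polystability is defined with respect to an arbitrary K\"ahler (or Gauduchon) class and the hypotheses of the corollary single out no preferred one. The concern is not hypothetical: take $S=C\times\mathbb{P}^1$ with $g(C)=2$, set $A=[C\times \mathrm{pt}]$, $B=[\mathrm{pt}\times\mathbb{P}^1]$, and $\omega=\alpha A+\beta B$ with $\alpha>\beta>0$; this $\omega$ is K\"ahler and $\deg_\omega K_S=2\alpha-2\beta>0$. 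The split bundle $E=\cL_1\oplus\cL_2$ with $\cL_1\otimes\cL_2^{-1}\simeq K_S^{-1}$, carrying the nilpotent Higgs field built from the constant section of $\cL_1\otimes\cL_2^{-1}\otimes K_S\simeq\mathcal{O}_S$, has $\deg\cL_1<\deg\cL_2$ and its only $\phi$-invariant proper subsheaves contained in $\cL_1$, so it is an $\omega$-stable Higgs pair with $[\phi,\phi^\dagger]\neq 0$ on a ruled surface. Hence $\deg_\omega K_S>0$ for a single polarisation does not force $\kappa(S)\geq 1$; what is true is that $\kappa(S)\geq 1$ is equivalent to $\deg_\omega K_S>0$ for \emph{every} K\"ahler class, so the classification would follow only if the obstruction were established for all polarisations simultaneously, which Theorem \ref{VT} does not provide. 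The paper itself states this clause without argument, so your hesitation is well founded: the second sentence of the corollary needs either an additional hypothesis on the polarisation or a separate argument, and neither your proposal nor the paper supplies one.
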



\subsubsection*{Topological restrictions on Higgs pairs}

Throughout the proof of Proposition \ref{VTK-} we not only have obtained that $\deg K_S \geq 0$ for stable Higgs pairs $(E,\phi)$, but also that the inequalities
\begin{equation}\label{ineq}
\boxed{\mu(E) \leq \mu(\Image \phi) \leq \mu(E) + \deg K_S}
\end{equation}
must always be true for $\phi\neq 0$. Here the first equality becomes strict whenever $\phi$ is not injective -- like in the case of nilpotent Higgs fields -- and the second one whenever $\phi$ is not surjective -- like for any sort of T-brane. 
Even in a surface $S$ with $\deg K_S > 0$, the condition \eqref{ineq} will restrict which Higgs pairs can be constructed on it.

Indeed, let us for simplicity consider the case where $\det E\simeq\mathcal{O}$. Then these inequalities are equivalent to
\begin{equation}\label{ineqdet}
0\leq\deg(\det\Image\phi)\leq\rk(\Image\phi)\deg(K_S)\,,
\end{equation}
This is reminiscent of the inequalities obtained in \eqref{inext} for the simple T-brane example based on the extension \eqref{ext}, or to those obtained in \cite{Marchesano:2017kke} for the case of split bundles. However, they are not fully equivalent. Indeed, taking such a nilpotent example and considering the case where $\deg(\mathcal{L}_1) + \deg(\mathcal{L}_2)= 0$ one obtains that \eqref{ineqdet} becomes
\begin{equation}\label{ineqTex}
0< \deg(\mathcal{L}_2) < \deg(K_S)\,.
\end{equation}
This condition should be supplemented with the inequality $2 \deg(\mathcal{L}_2) \leq \deg(K_S)$ in order to arrive to \eqref{inext} applied to this case. This extra condition describes in a more precise manner which line bundles $\mathcal{L}_2$ are allowed given $\deg(K_S)$. In this sense \eqref{ineqTex} only contains partial information, although it is still stronger than the stability condition $0< \deg(\mathcal{L}_2)$: strong enough to require that $\deg(K_S) > 0$.

\section{Structure of the Higgs field} 
\label{s:sufficient}

Theorem \ref{VT} gives a necessary condition for a complex manifold $X$ to allow for a non-trivial Higgs field, assuming Higgs bundle polystability. But even if $\deg K_X \geq 0$ is satisfied, it is in general not clear whether $\phi$ can actually be non-trivial, except for the components proportional to the identity, which are in one-to-one correspondence with sections of $K_X$. In the following we would like to give sufficient conditions for the existence of non-trivial Higgs fields, with particular emphasis on the traceless data of $\phi$.  We will focus our discussion on the particular rank-2 example analysed in section \ref{s:nonsplit} and on a simple generalisation thereof, for which we will show how to construct $\phi$ in terms of the bundle data. We will then give a physical interpretation of our results, which permits to extend this picture to other settings.

\subsubsection*{Simple rank-2 example}

Let us consider again the extension of two line bundles $\mathcal{L}_1,\mathcal{L}_2$ over a complex manifold $X$, as discussed in section \ref{s:nonsplit}, namely\footnote{To make formulas shorter, in this section we are going to omit the zeroes in the exact sequences.}
\begin{equation}\label{ext2}
\mathcal{L}_1\xrightarrow{i} E\xrightarrow{q} \mathcal{L}_2 \,,
\end{equation}
where $i$ is an embedding. The extension class defining $E$ is an element of the group $\Ext^1(\mathcal{L}_2,\mathcal{L}_1)\simeq H^1(\mathcal{L}_2^{-1}\otimes\mathcal{L}_1)$. Following a standard construction (see e.g.~\cite{Harder}), we may build the following grid of long exact sequences in cohomology:
\begin{figure}[h]
\vspace*{.25cm}
\small \hspace*{-.25cm}
\begin{tikzcd}[row sep=normal,column sep=small]
\Ext^0(\mathcal{L}_2,\mathcal{L}_1\otimes K_X) \dar{\tilde{i}_*} \rar{q^*} & \Ext^0 (E,\mathcal{L}_1\otimes K_X) \rar{i^*} \dar{\tilde{i}_*} & \Ext^0 (\mathcal{L}_1,\mathcal{L}_1\otimes K_X) \dar{\tilde{i}_*} \rar{\Lambda_3} & \Ext^1 (\mathcal{L}_2,\mathcal{L}_1\otimes K_X) \rar\dar{\tilde{i}_*}&\cdots  \\
\Ext^0(\mathcal{L}_2, E\otimes K_X) \rar{q^*} \dar{\tilde{q}_*} & \Ext^0(E,E\otimes K_X) \rar{i^*} \dar{\tilde{q}_*}& \Ext^0(\mathcal{L}_1,E\otimes K_X)  \rar{}\dar{\tilde{q}_*} & \Ext^1(\mathcal{L}_2,E\otimes K_X)\rar{}\dar{\tilde{q}_*}&\cdots\\
\Ext^0(\mathcal{L}_2,\mathcal{L}_2\otimes K_X) \rar{q^*}\dar{}  & \Ext^0(E,\mathcal{L}_2\otimes K_X) \rar{i^*} \dar{} & \Ext^0(\mathcal{L}_1,\mathcal{L}_2\otimes K_X)  \rar{\Lambda_+} \dar{} & \Ext^1(\mathcal{L}_2,\mathcal{L}_2\otimes K_X)\rar \dar&\cdots\\
\vdots&\vdots&\vdots&\vdots
\end{tikzcd}
\caption{Grid of long exact sequences in cohomology associated to the sequence \eqref{ext2}.}
\label{grid}
\end{figure}

\noindent
where $\tilde{i}\equiv i\otimes {\rm Id}_{K_X}$, $\tilde{q}\equiv q\otimes {\rm Id}_{K_X}$, and the maps $\Lambda_3$, $\Lambda_+$ will be discussed below. Since all the sheaves are locally free here, we simply have 
\begin{equation}
\Ext^i(\cdot,\cdot)\simeq H^i({\rm Hom}(\cdot,\cdot)) := H^i(X,\Hom(\cdot,\cdot)).
\end{equation} 
%
%
Moreover, for all $\a\in \Ext^0(E,\mathcal{L}_1 \otimes K_X) = H^0 \left( \Hom (E, \mathcal{L}_1 \otimes K_X)\right)$, we have $i^*(\alpha)=\alpha\circ i$ and $\tilde{i}_*(\a) = (i \otimes \Id_{K_X}) \circ \a$. Similar definitions apply for $q^*$ and $\tilde{q}_*$.
In the following we would like to show how the global sections $\phi \in \Ext^0(E,E\otimes K_X) \simeq H^0 \left( \End (E) \otimes K_X\right)$ can be constructed in terms of the other elements of the grid. We will use the identification $\Hom(\mathcal{E}_1,\mathcal{E}_2) \simeq \mathcal{E}_1^* \otimes \mathcal{E}_2$ and the notation $H^i(\mathcal{E}) := H^i(X,\mathcal{E})$ throughout, for all sheaves $\mathcal{E}$, $\mathcal{E}_1$ and $\mathcal{E}_2$ on $X$.

The traceless Higgs fields are sections of $\End_0 (E) \otimes K_X$, where $\End_0 (E)$ is the sheaf of traceless endomorphisms of $E$.
Note that
\be \End (E) = \mathcal{O}_X \oplus \End_0 (E)\,, \ee
where $\mathcal{O}_X$ corresponds to multiples of $\Id_E$. Consequently
\be H^0 (\End(E) \otimes K_X) = H^0(K_X) \oplus H^0(\End_0(E) \otimes K_X)\, ,\ee
with elements of $H^0(K_X)$ being of the form
$\Id_E \otimes s$
for some global section $s$ of $K_X$. Moreover,
\be
 \tilde{q}_*(\Id_E \otimes s) = q \otimes s \neq 0
\ee
whenever $s \neq 0$. This implies that multiples of the identity inject into $H^0(\Hom(E,\cL_2 \otimes K_X))$ under $\tilde{q}_*$, and thus the image under $\tilde{i}_*$ of every non-zero element of $H^0(\Hom(E,\cL_1 \otimes K_X))$ has a non-zero traceless part, i.e.
\be 0 \neq \left(\tilde{i}_*(\alpha) - \frac{1}{2}\Id_E \otimes \tr(\tilde{i}_*(\alpha)) \right) \in H^0(\End_0(E) \otimes K_X)\,,\ee
for every $0 \neq \alpha \in H^0(\Hom(E, \cL_1 \otimes K_X))$. As a consequence we have the following:
\begin{lemma}
	If $H^0(\Hom(E,\cL_1 \otimes K_X)) \neq 0$, then $E$ admits non-zero traceless Higgs fields.
\end{lemma}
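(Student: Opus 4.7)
The plan is to produce, from any non-zero $\alpha\in H^0(\Hom(E,\mathcal{L}_1\otimes K_X))$, an explicit non-zero element of $H^0(\End_0(E)\otimes K_X)$. The natural candidate is the traceless projection of $\tilde{i}_*(\alpha)\in H^0(\End(E)\otimes K_X)$: writing
\begin{equation}
\tilde{i}_*(\alpha)\;=\;\tfrac{1}{2}\,\Id_E\otimes\tr\bigl(\tilde{i}_*(\alpha)\bigr)\;+\;\phi_0,
\end{equation}
the claim reduces to showing $\phi_0\neq 0$. Equivalently, $\tilde{i}_*(\alpha)$ is not a multiple of the identity.

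First I would observe that $\tilde{i}_*:H^0(\Hom(E,\mathcal{L}_1\otimes K_X))\hookrightarrow H^0(\Hom(E,E\otimes K_X))$ is injective. Indeed, since $i:\mathcal{L}_1\hookrightarrow E$ is an embedding, the sheaf map $\Hom(E,\mathcal{L}_1\otimes K_X)\to\Hom(E,E\otimes K_X)$ given by post-composition with $i\otimes\Id_{K_X}$ is injective, and left-exactness of global sections preserves this injectivity (this is exactly the horizontal injectivity of $\tilde{i}_*$ visible in the top row of the grid in Figure \ref{grid}). In particular $\tilde{i}_*(\alpha)\neq 0$.

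Next, I would use that the composition $\tilde{q}_*\circ\tilde{i}_*$ vanishes. This is immediate from $q\circ i=0$ in the defining sequence \eqref{ext2}, since $\tilde{q}_*\tilde{i}_*(\alpha)=\bigl((q\circ i)\otimes\Id_{K_X}\bigr)\circ\alpha=0$. Suppose now, for contradiction, that $\phi_0=0$, so that $\tilde{i}_*(\alpha)=\Id_E\otimes s$ for some $s\in H^0(K_X)$. Applying $\tilde{q}_*$ and using the computation already recorded above the lemma,
\begin{equation}
0\;=\;\tilde{q}_*\tilde{i}_*(\alpha)\;=\;\tilde{q}_*(\Id_E\otimes s)\;=\;q\otimes s.
\end{equation}
Because $q$ is the non-zero surjection $E\twoheadrightarrow\mathcal{L}_2$, the vanishing of $q\otimes s$ in $H^0(\Hom(E,\mathcal{L}_2\otimes K_X))$ forces $s=0$. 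Thus $\tilde{i}_*(\alpha)=0$, and by injectivity of $\tilde{i}_*$ we obtain $\alpha=0$, contrary to hypothesis. Hence $\phi_0\neq 0$, which proves the lemma.

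There is no real obstacle here: the argument is a formal consequence of the two facts isolated in the discussion preceding the lemma, namely (i) $\tilde{i}_*$ is injective, and (ii) $\tilde{q}_*$ detects the identity component. The only mild subtlety is making sure the trace-decomposition is taken in the correct rank-$2$ normalisation, so that the resulting $\phi_0$ lands in $H^0(\End_0(E)\otimes K_X)$ rather than in some rescaled summand; this is a bookkeeping point rather than a genuine difficulty.
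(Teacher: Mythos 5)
Your argument is correct and is essentially the paper's own proof: both rest on the injectivity of $\tilde{i}_*$, the vanishing of $\tilde{q}_*\circ\tilde{i}_*$ coming from $q\circ i=0$, and the fact that $\tilde{q}_*(\Id_E\otimes s)=q\otimes s\neq 0$ for $s\neq 0$, so that the image of $\tilde{i}_*$ meets the identity component only in zero. The paper phrases this directly rather than by contradiction, but the content is identical.
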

\medskip
Let us describe $H^0(\Hom(E,\cL_1 \otimes K_X))$ in detail, in terms of the first row of the grid in figure \ref{grid}: 
\begin{equation}\label{sections-first-piece}H^0(\mathcal{L}_2^{-1} \otimes \mathcal{L}_1\otimes K_X) \xrightarrow{q^*}H^0(\Hom(E,\cL_1 \otimes K_X))
  \xrightarrow{i^*} H^0(K_X) \xrightarrow{\Lambda_3}H^1(\cL_2^{-1}\otimes \cL_1 \otimes K_X) \xrightarrow{}  \cdots \,.
\end{equation}
We can describe any element of $H^0(\Hom(E,\cL_1 \otimes K_X))$ as
\be\label{alphaelements}
\a = q^*(m) + \hat{v}\, , 
\ee
for some $m\in H^0(\mathcal{L}_2^{-1} \otimes \mathcal{L}_1\otimes K_X)$ and $\hat{v}\in H^0(\Hom(E,\cL_1 \otimes K_X))/q^*(H^0(\mathcal{L}_2^{-1} \otimes \mathcal{L}_1\otimes K_X))$, where the latter maps injectively into $H^0(K_X)$ under $i^*$.
This in turn gives rise to a Higgs field $\tilde{i}_*(\a)\in\End(E)\otimes K_X$, whose trace is:
	\be \label{traceformula}
	\tr \left(\tilde{i}_*(\a)\right) = \a \circ i = i^*(\hat{v}) \, ,
	\ee
where the first equality follows by direct local computation (see Lemma \ref{LemmaA1}), and in the last one we have used that $q \circ i = 0$.

We end our analysis of $H^0(\Hom(E,\cL_1 \otimes K_X))$ by discussing sufficient conditions for the existence of traceless Higgs fields with $m \neq 0$ or $\hat{v} \neq 0$. By injectivity of $\tilde{i}_*$ and $q^*$,  
\be \phi = \tilde{i}_*(q^*(m))
\ee
is a non-zero, nilpotent (because $q \circ i = 0$) Higgs field,
iff
$h^0(\cL_2^{-1}\otimes \cL_1 \otimes K_X) \neq 0$.
Moreover, we have that
\be \nonumber
\hat{v} \in H^0(\Hom(E,\cL_1 \otimes K_X))/q^*(H^0(\mathcal{L}_2^{-1} \otimes \mathcal{L}_1\otimes K_X))
\simeq i^*(H^0(\Hom(E,\cL_1 \otimes K_X))).
\ee
To see if there is a non-trivial element of this form we again consider the first row of the grid, given by \eqref{sections-first-piece}, but now the part involving the map $\Lambda_3$. We have that if
\be\label{suffcond1stuntw}
h^1(\cL_2^{-1}\otimes \cL_1 \otimes K_X) < h^0(K_X)\,,
\ee
then $\Lambda_3$ has a non-trivial kernel, or equivalently $ \dim(i^*(H^0(\Hom(E, \cL_1 \otimes K_X)))) \geq 1$. In this case there exist non-vanishing traceless Higgs fields of the form $\tilde{i}_*(\hat{v})$.

Finally, let us consider the third row of the grid in figure \ref{grid}. We have that any element of $\Ext^0(E,\cL_2 \otimes K_X)) \simeq H^0(\Hom(E,\cL_2 \otimes K_X))$ is of the form
\be
 q^*(w) + \hat{\psi}\,,
 \ee
with $w \in \Ext^0(\mathcal{L}_2,\mathcal{L}_2\otimes K_X)  \simeq H^0(K_X)$ and $\hat{\psi} \in H^0(\Hom(E,\cL_2 \otimes K_X))/q^*(H^0(K_X))$. 
Now, we have that $q^*(w) = \tilde{q}_*(\Id_E \otimes w)$ and so $\tilde{q}_*(H^0(K_X)) \simeq q^*(H^0(K_X))$. 
In other words, the Higgs fields that are not in the image of $\tilde{i}_*$ are of the type
\be\label{psiHiggs}
\Id_E \otimes w + \psi\,,
\ee
 where the traceless part $\psi$ gets injectively mapped by $\tilde{q}_*$ to
\be
\tilde{q}_*(\psi)=\hat\psi\in H^0(\Hom(E,\cL_2 \otimes K_X))/q^*(H^0(K_X))\simeq i^*(H^0(\Hom(E,\cL_2 \otimes K_X)))\, .
\ee
From the above, $\hat\psi$ is the counterimage under $i^*$ of an element $p\in H^0(\mathcal{L}_1^{-1} \otimes \mathcal{L}_2\otimes K_X)$. Elements of this form exist iff the matrix $\Lambda_+$ has a non-trivial kernel. Hence, similarly to before, a sufficient condition for non-zero traceless Higgs fields $\psi\in H^0(\End(E) \otimes K_X)/\tilde{i}_*(H^0(\Hom(E,\cL_1 \otimes K_X)))$ is that 
\be\label{suffcond2nduntw}
h^1(K_X)\,<\,h^0(\mathcal{L}_1^{-1} \otimes \mathcal{L}_2\otimes K_X)\,.
\ee

\bigskip
\noindent
{\em Summary.}
Putting all this together, we see that any traceless Higgs field $\phi \in H^0(\End_0(E) \otimes K_X)$
can be written as

\be 
\phi = \left( \tilde{i}_*(\a) - \frac{1}{2}\Id_E \otimes\, i^*(\a)  \right) + \psi \, ,
\label{finalphi}
\ee
for some $\a \in H^0(\Hom(E,\cL_1 \otimes K_X))$ and 
$\psi \in H^0(\End_0(E) \otimes K_X)$ such that
\be 
\tilde{q}_*(\psi) \in H^0(\Hom(E,\cL_2 \otimes K_X))/q^*(H^0(K_X)).
\ee
Note that $\tilde{i}_*(\a) \circ i: \cL_1 \rightarrow \cL_1 \otimes K_X \subset E \otimes K_X$. This means that:
\begin{itemize}
	\item $\cL_1$ is $\phi$-invariant whenever $\psi = 0$, and hence, in this case, the pair $(E,\phi)$ is Higgs stable iff $E$ is stable. Equivalently, only when $\psi \neq 0$ can we construct with this method a stable Higgs pair $(E,\phi)$ where the underlying bundle $E$ is unstable.
\end{itemize}
Moreover, $\tilde{i}_*(q^*(m)) \circ i = 0$ for all $m \in H^0(\cL_2^{-1}\otimes\cL_1 \otimes K_X)$. Hence:
\begin{itemize}
\item There exist nilpotent Higgs fields $\phi = \tilde{i}_*(q^*(m))  \in H^0(\End_0(E) \otimes K_X)$ for all $m \in H^0(\cL_2^{-1}\otimes\cL_1 \otimes K_X)$.
\end{itemize}
We also have that:
\begin{itemize}
	\item If $h^1(\cL_2^{-1}\otimes \cL_1 \otimes K_X) < h^0(K_X)$,
	there exist non-zero traceless Higgs fields of the form $\phi = \left( \tilde{i}_*(\hat{v}) -\tfrac12 \Id_E \otimes \, i^*(\hat{v})  \right)$, with $\hat{v}\in H^0(\Hom(E,\cL_1 \otimes K_X))/q^*(H^0(\mathcal{L}_2^{-1} \otimes \mathcal{L}_1\otimes K_X))$.
	\item If $h^1(K_X)\,<\,h^0(\mathcal{L}_1^{-1} \otimes \mathcal{L}_2\otimes K_X)$, there exist non-zero traceless Higgs field of the form $\phi=\psi$.
\end{itemize}

\subsubsection*{Twisting by an ideal sheaf}

The above discussion can be adapted to include an ideal sheaf twist of the extension \eqref{ext2}:
\begin{equation}\label{extis}
 \mathcal{L}_1\xrightarrow{i} E\xrightarrow{q} \mathcal{L}_2\otimes I_Z \,,
\end{equation}
where $I_Z$ is the ideal sheaf of a codimension-$2$ locus $Z \subset X$. For simplicity, and in view of the physics applications, let us work in the case where $X=S$ is a K\"ahler surface, so that $Z$ is a finite set of distinct points. As mentioned, this twist can be used to engineer vector bundles that cannot be continuously deformed to a split form, see Appendix \ref{TopNonSp}.

To begin with, note that now the last term of the sequence is a non-locally-free sheaf. Correspondingly $i$, thought of as a map between sheaves of sections, fails to be injective on $Z$. Similarly to the untwisted case \eqref{ext2}, the extension class defining $E$ is an element $\xi\in\Ext^1(\mathcal{L}_2\otimes I_Z,\mathcal{L}_1)$. This group however is more complicated, but can be described using the following long exact sequence
(see \cite[Chapter 2]{Friedman} for details):
\be
H^1(\mathcal{L}_2^{-1}\otimes\mathcal{L}_1)\xrightarrow{j} \Ext^1(\mathcal{L}_2\otimes I_Z,\mathcal{L}_1)\xrightarrow{\pi}\mathbb{C}^{|Z|}\xrightarrow{Y} H^0(\mathcal{L}_1^{-1}\otimes\mathcal{L}_2\otimes K_S)\xrightarrow{} \cdots
\label{seqY}
\ee
where we have used that $Ext^1(\mathcal{L}_2\otimes I_Z,\mathcal{L}_1)\simeq \mathcal{O}_Z$ is the structure sheaf on $Z$, and thus $H^0(Ext^1(\mathcal{L}_2\otimes I_Z,\mathcal{L}_1))\simeq \mathbb{C}^{|Z|}$. Moreover, $H^2(\mathcal{L}_2^{-1}\otimes\mathcal{L}_1)\simeq H^0(\mathcal{L}_1^{-1}\otimes\mathcal{L}_2\otimes K_S)$ by Serre duality. 
Therefore the extension class can be written as
\be\label{extclassideal}
\xi = j(a_+) + \hat{\xi}\,,
\ee
where $a_+\in H^1(\mathcal{L}_2^{-1}\otimes\mathcal{L}_1)$, and $\hat{\xi}\in\Ext^1(\mathcal{L}_2\otimes I_Z,\mathcal{L}_1)/j(H^1(\mathcal{L}_2^{-1}\otimes\mathcal{L}_1))$, which has the property of having a non-zero value on $Z$, i.e.~$\pi(\hat{\xi})\neq0$. Note that the existence of $\hat{\xi}$ is obstructed if the map $Y$ is injective. 

The strategy to determine the components of a traceless Higgs field compatible with \eqref{extis} is similar to the one applied to \eqref{ext2}, so in the following we will simply point out the main differences. The first row of the grid in Figure \ref{grid} is replaced by:
\be\label{1strowideal}
H^0(\mathcal{L}_2^{-1}\otimes\mathcal{L}_1\otimes K_S)\xrightarrow{q^*} \Ext^0(E,\mathcal{L}_1\otimes K_S)\xrightarrow{i^*}H^0(K_S\otimes I_Z)\xrightarrow{\Lambda_3} H^1(\cL_2^{-1}\otimes \cL_1 \otimes K_S)\xrightarrow{} \cdots
\ee
see Appendix \ref{ap:homalgebra} for details.

Considering the elements $\alpha\in H^0(\Hom(E,\cL_1 \otimes K_S))$ as in \eqref{alphaelements}, all Higgs fields of the form $ \tilde{i}_*(\a)$ vanish on $Z$ because of the map $i$. Moreover, while nothing changes regarding the existence of those of the form $ \tilde{i}_*(q^*(m))$ compared to the untwisted case, some differences arise for Higgs fields of the form $ \tilde{i}_*(\hat{v})$. Indeed the elements $\hat{v}$ now map injectively under $i^*$ into $H^{0}(K_S\otimes I_Z)$, which are sections of the canonical bundle that vanish on $Z$. Moreover, not all such sections will lift to a Higgs field, but only those in the kernel of the map $\Lambda_3$. Similarly to the untwisted case (cf. \eqref{suffcond1stuntw}), a sufficient condition for them to exist is
\be 
h^1(\cL_2^{-1}\otimes \cL_1 \otimes K_S) < h^0(K_S\otimes I_Z)\,,
\label{suffvis}
\ee
which guarantees that the kernel of $\Lambda_3$ in \eqref{1strowideal} (and thus the image of $i^*$) is non-trivial.

Finally, the third row of the grid in Figure \ref{grid} becomes:
\be\label{3strowideal}
H^0(K_S)\xrightarrow{q^*} \Ext^0(E,\mathcal{L}_2\otimes K_S\otimes I_Z)\xrightarrow{i^*}H^0(\mathcal{L}_1^{-1}\otimes \mathcal{L}_2 \otimes K_S\otimes I^2_Z)\xrightarrow{\Lambda_+} H^1(K_S)\xrightarrow{} \cdots
\ee
(see Appendix \ref{ap:homalgebra} for details).
This tells us that, traceless Higgs fields of the type $\psi$ as in \eqref{psiHiggs} must be such that $p=i^*(\tilde{q}_*(\psi))\in\ker\Lambda_+\subset H^0(\mathcal{L}_1^{-1}\otimes\mathcal{L}_2\otimes K_S\otimes I^2_Z)$. In particular, $p$ is a section of $\mathcal{L}_1^{-1}\otimes\mathcal{L}_2\otimes K_S$ that vanishes quadratically on $Z$. However, not all such $p$ will lift to a Higgs field, but only those in the kernel of $\Lambda_+$.  
Hence, similarly to Eq. \eqref{suffcond2nduntw} for the untwisted case, a sufficient condition for the existence of Higgs fields of the type $\psi$ is
\be
h^1(K_S)\,<\,h^0(\mathcal{L}_1^{-1} \otimes \mathcal{L}_2\otimes K_S\otimes I_Z^2)\,.
\label{suffpis}
\ee

\subsubsection*{Physical interpretation}

When applied to Vafa-Witten systems, for which $X$ is a K\"ahler surface $S$, the results of this section have a nice interpretation in terms of the physics of 7-branes.   Higgs-bundle polystability guarantees that, given a pair $(E,\phi)$ on $S$, a solution will be found for \eqref{Dterm}, known as 7-brane D-term equation in the physics literature. However, from the physics viewpoint, which usually employs a unitary gauge, it still remains to show that the other two equations in \eqref{VW} allow for a pair $(E,\phi)$ in which $\phi$ is non-trivial. These two equations \eqref{Fterm1} and \eqref{Fterm2} are known as 7-brane F-term equations, because they can be obtained from extremisation of the following superpotential \cite{Jockers:2005zy,Martucci:2006ij,Donagi:2008ca,Beasley:2008dc}
\be
W\, =\, \int_S \tr \left(\mathbb{F} \wedge \phi\right)\, .
\label{supo}
\ee
%


From the mathematics perspective, a non-trivial $\phi$ amounts to the existence of certain holomorphic sections, whose precise definition depends on the choice of bundle $E$. At the beginning of this section we have described the holomorphic sections that define a Higgs field $\phi$ compatible with a bundle extension of the form \eqref{ext2}. In doing so we have illustrated how the different components of $\phi$ are determined by the topological data of $E$ and $S$. In the following we would like to show that our results have a direct interpretation in terms of the restrictions that the superpotential \eqref{supo} imposes on $\phi$.

To make the connection it is useful to first consider the case where the extension \eqref{ext2} is trivial, in the sense that the bundle $E$ is of the split type \eqref{split}. Then it is straightforward to describe the holomorphic traceless deformations of the bundle connection $\mathbb{A}$ as
\be
\delta\mathbb{A}^{(0,1)}=\left(\begin{array}{cc} a_3 &a_+\\ a_-& -a_3 \end{array}\right)\, ,
\label{fluctua}
\ee
where $a_3 \in H^1(\cO_S)$, $a_+ \in \Ext^1 (\mathcal{L}_2,\mathcal{L}_1)$ and $a_- \in \Ext^1 (\mathcal{L}_1,\mathcal{L}_2)$. Let us for now assume that $S$ is simply connected, and so there are no holomorphic deformations along the diagonal. For $\deg(\mathcal{L}_1) = \deg(\mathcal{L}_2)$ the bundle is polystable by itself, and the Vafa-Witten equations admit a solution with vanishing Higgs field. The traceless deformations of $\phi$ are given by
\be
\delta\phi=  \left(\begin{array}{cc} {v}& {m}\\ {p}&- {v}\end{array}\right)\,,
\label{fluctuphi}
\ee
where 
\be
m\in \Ext^0(\mathcal{L}_2,\mathcal{L}_1\otimes K_S)\, , \qquad p \in \Ext^0(\mathcal{L}_1,\mathcal{L}_2\otimes K_S)\, , \qquad v \in H^0(K_S)\, .
\label{phidef}
\ee
If instead $\deg(\mathcal{L}_1) < \deg(\mathcal{L}_2)$ the bundle is not stable in the Hermite-Yang-Mills sense, and the deformation $m$ has to be turned on to guarantee Higgs stability, leading to the nilpotent solution of the form \eqref{NilpPhi}. Bundle and Higgs deformations around this nilpotent solution are also of the form \eqref{fluctua} and \eqref{fluctuphi}. As one can check, the deformations \eqref{phidef} correspond to specific elements of the grid in Figure \ref{grid}. 

Clearly, the fact that all these deformations are holomorphic sections is a consequence of the F-term equations \eqref{Fterm1} and \eqref{Fterm2}, but it is not the only one. F-terms also impose constraints on performing more than one deformation simultaneously. This can be easily encoded in the trilinear terms involving fluctuations that arise from the superpotential \eqref{supo}, dubbed Yukawa couplings. The ones relevant for the present setup have been computed in \cite[Section 3]{Marchesano:2019azf}, and arise from the following piece of the superpotential
\be
W \supset  \int_S a_+ \wedge a_- \wedge v  =  \tilde \Lambda_3^{in\a} a_{+\, i} a_{-\, n} v_{\a}\, ,
\label{yuk}
\ee
where we have expanded each of the sections $a_+$, $a_-$, and $v$ in a basis of the corresponding cohomology groups, as $a_+ = a_{+\, i} \psi_+^i$, $a_- = a_{-\, n} \psi_-^n$ and $v = v_\a \chi_3^\a$ with $a_{+\, i}, a_{-\, n}, v_\a \in \IC $. In physics terms, these complex numbers represent the vacuum expectation values (vevs) of the four-dimensional fields, and therefore the local field space. Finally we have defined
\be
\tilde \Lambda^{in\a}_{3}=\int_S \psi_+^i\wedge\psi_-^n\wedge\chi_{3}^{\a}\,.
\ee
The F-flatness constraints derived from the Yukawa couplings \eqref{yuk} are the following \cite{Marchesano:2019azf} 
\be
 \tilde \Lambda^{i n \alpha}_{3} a_{+\, i} a_{-\, n} = \tilde \Lambda^{in \a}_{3} a_{-\, n} v_{\a}= \tilde \Lambda^{i n \a}_{3} a_{+\, i} v_{\a} =0\,,
\label{condyuk}
\ee
and so, e.g., care must be taken when performing a simultaneous deformation along the directions $a_+$ and $v$. Other deformations like $m$ and $p$ in \eqref{phidef} are on the other hand unobstructed by the above Yukawa couplings. 

Now, the extension \eqref{ext2} is described by an element of $H^1(\mathcal{L}_2^{-1} \otimes \mathcal{L}_1) \simeq \Ext^1 (\mathcal{L}_2,\mathcal{L}_1)$, and in this sense it can be understood as a deformation of the split bundle \eqref{split} along the direction $a_+$. Let us denote by $\langle a_+\rangle = \langle a_{+\, i}\rangle \psi_+^i \in \Ext^1 (\mathcal{L}_2,\mathcal{L}_1)$ the element describing a particular bundle extension. Then  $\Lambda_3^{n \a} \equiv  \tilde \Lambda^{i n \a}_{3} \langle a_{+\, i}\rangle$ defines a linear map
\be
\Lambda_3 \, :\, H^0(K_S) \raw  \Ext^1(\mathcal{L}_2,\mathcal{L}_1\otimes K_S)\simeq H^1 (\mathcal{L}_1^{-1}\otimes\mathcal{L}_2)\, ,
\label{lambda3}
\ee
where we have used Serre duality. The last equality in \eqref{condyuk} then translates into the statement that the Higgs field deformations along the direction $v$ allowed by the F-terms must belong to $ \ker \Lambda_3$. To sum up, given a vector bundle $E$ defined by the extension \eqref{ext2} on a simply-connected surface $S$, the Higgs field deformations allowed by Yukawa couplings are those of the form \eqref{fluctuphi} with arbitrary $m$, $p$, and with $v \in \ker \Lambda_3$. 

This result nicely matches the characterisation of traceless Higgs fields compatible with the extension \eqref{ext2}, as summarised by \eqref{finalphi} and the statements below. There are essentially three contributions to \eqref{finalphi}, that must now be identified with different elements of the grid in Figure \ref{grid}, and that can be put in correspondence with sections of the form \eqref{phidef}. The contribution $\phi = \tilde{i}_*(q^*(m))$ is in one-to-one correspondence with elements of $\Ext^0(\mathcal{L}_2,\mathcal{L}_1\otimes K_S)$ and gives rise to a nilpotent Higgs field, as in the split case. The contribution $\phi = \left( \tilde{i}_*(\hat{v}) -\tfrac12 \Id_E \otimes \,i^*(\hat{v})  \right)$ corresponds to elements of $H^0(K_S)$ that belong to the kernel of the map $\Lambda_3$ in the grid. This precisely matches our F-term analysis if we identify such a map with \eqref{lambda3}. Finally, the contribution $\phi = \psi$ corresponds to sections of the form $p \in \Ext^0(\mathcal{L}_1,\mathcal{L}_2\otimes K_S)$ which are in the kernel of the map $\Lambda_+$ in the grid. As we have assumed that $H^1(\cO_S)=0$, this map is such that the kernel is the whole of $\Ext^0(\mathcal{L}_1,\mathcal{L}_2\otimes K_S)$, and so these deformations are unconstrained. If on the other hand we consider a surface such that $H^1(\cO_S)\neq 0$, then new Yukawa couplings will be developed from the superpotential \eqref{supo} leading to an obstruction analogous to the one found above. 

Indeed, the Yukawa couplings for this more general case has been worked out in \cite[Appendix C]{Marchesano:2019azf}, and result in
\be
W\, \supset \,  \int_S a_+ \wedge a_- \wedge v  +   \int_S a_3 \wedge a_+ \wedge p -   \int_S a_3 \wedge a_- \wedge m \, .
\label{yuk2}
\ee
Similarly to the map $\Lambda_3$, one can construct a map $\Lambda_+$ from the second term in \eqref{yuk2}, and the element $\langle a_+\rangle \in \Ext^1 (\mathcal{L}_2,\mathcal{L}_1)$ describing the bundle extension:
\be
\Lambda_+ \, :\, \Ext^0(\mathcal{L}_1,\mathcal{L}_2\otimes K_S) \raw H^1 (\cO_S) \simeq H^1 (K_S)\, .
\label{lambda+}
\ee
The Higgs-field deformations allowed by the F-term analysis are those that belong to the kernel of this map. This precisely matches the result for the contribution of the form $\phi = \psi$ in our previous analysis, upon identifying \eqref{lambda+} with the map $\Lambda_+$ in the grid. Finally, as the third term in \eqref{yuk2} does not contain $a_+$, no further constraint arises from it, and so as expected the Higgs-field deformation coming from $m$ remains unconstrained. 

One of the advantages of this F-term analysis is that it is easily generalised to other setups, like for instance bundles of higher rank on a K\"ahler surface $S$. Indeed, if one is able to understand the bundle $E$ as a deformation of a split bundle, then one may identify the unobstructed Higgs field deformations in terms of the Yukawa couplings of the configuration, and therefore construct the most general Higgs field. The only drawback of this approach is that it is only reliable for small deformations of the pair $(E, \phi)$ around the split solution. Nevertheless, the F-term analysis could give valuable insight on how to generalise the mathematical construction for the above rank-$2$ bundle to other setups. \\

The ideal-sheaf twisting \eqref{extis} also has a nice interpretation in terms of D-brane physics. This time, on top of the 7-branes wrapping $S$ one adds $|Z|$ D3-branes (with $|Z|$ the number of points in $Z$), and switches on the vevs of the fields in the 37-sector to form a bound state.\footnote{See \cite{Collinucci:2014qfa} for a different analysis of the same system.} The 37-sector is described by two chiral multiplets at each D3-brane location, with opposite charge under the U(1) of the D3-brane, and transforming as a doublet under the 7-brane rank-$2$ bundle. Let us label them as
\be
\Phi^K_{73} = 
 \left(
 \begin{matrix}
\phi_{73\, -}^K \\
\phi_{73\, +}^K
\end{matrix}
\right)\, ,
\qquad
\Phi^K_{37} = 
 \left(
 \begin{matrix}
\phi_{37\, +}^K \\
\phi_{37\, -}^K
\end{matrix}
\right)\, ,
\ee
with $K$ running over the $|Z|$ points ${Q}_K$ where the D3-branes are located, which we assume near the disjoint set $Z = \{P_K\}_{K=1,\ldots,|Z|}$. The $\pm$ above indicates charge $\pm1$ under the relative U(1) of the 7-brane stack, assigning positive charge to the first 7-brane and negative charge to the second 7-brane. The superpotential including these modes reads 
\bea
\label{yuk37}
W & \supset & \int_S a_+ \wedge a_- \wedge v  +   \int_S a_3 \wedge a_+ \wedge p -   \int_S a_3 \wedge a_- \wedge m \\ \nonumber 
& + &  \sum_K\left[  v(Q_K)  \left(\phi_{73\, -}^K  \phi_{37\, +}^K   - \phi_{73\, +}^K \phi_{37\, -}^K\right) +   m(Q_K)  \phi_{73\, -}^K  \phi_{37\, -}^K   +   p(Q_K) \phi_{73\, +}^K \phi_{37\, +}^K\right] \, ,
\eea
where $p(Q_K)$ stands for the value of the section $p$ at a point $Q_K$, same for $v(Q_K)$ and $m(Q_K)$. The second line does not involve any integral, because the 37-sector fields are $\delta$-function localised at the corresponding D3-brane location. In the brane system this location is not fixed, and depends on neutral complex fields $\phi_{33, K}$ of each D3-brane. Therefore, one should understand terms like $p(Q_K)$ as a Taylor expansion
\be
p(Q_K) \equiv \left[e^{\phi_{33, K} \cdot \p}\, p \right] (P_K) =  p (P_K) +  \p p (P_K)\, \phi_{33, K} + \dots\,, 
\label{taylor}
\ee
where $\phi_{33, K} = Q_K - P_K$ measures the separation of the $K^{\rm th}$ D3-brane from $P_K$. Additionally the section $p$ can be decomposed on a basis $\{\chi_-^\lambda\} \in H^0(\mathcal{L}_1^{-1}\otimes\mathcal{L}_2\otimes K_X)$ as $p = p_\lambda \chi^\lambda_-$, with $p_\lambda \in \IC$ representing the four-dimensional-field vev, so that $ \p_i p (P_K) = p_\lambda \p \chi^\lambda_- (P_K)$, etc. Similar expansions hold for $v(Q_K)$ and $m(Q_K)$, from where it follows that the second line of \eqref{yuk37} not only contains cubic couplings on the fields $v_\a, m_\sigma, p_\lambda, \phi_{37\pm}^K, \phi_{73\pm}^K$,  but also higher order couplings involving the fields $\phi_{33, K}$.

Taking this into account, one can understand the constraints found for the twisted bundle \eqref{extis} as switching on vevs to the new fields in the 37-sector, and then imposing the F-flatness constraints coming from \eqref{yuk37}. More precisely, to match our previous results we must switch on equal non-vanishing vevs for the fields $\phi_{73\, +}^K$ and $ \phi_{37\, +}^K$ $\forall K$, represented by $ \phi_{+}^K \in \IC$, while keeping  the vevs of $\phi_{73\, -}^K,  \phi_{37\, -}^K, \phi_{33, K}$ to zero.

Under this assumption, one finds that the F-flatness conditions for the fields $p_\lambda$ read
\be
 \sum_K \chi^\lambda_- (P_K) \,   ( \phi_{+}^K )^2  = 0\, , \quad \forall \lambda\, .
\ee
This imposes $h^0(\mathcal{L}_1^{-1}\otimes\mathcal{L}_2\otimes K_X)$ constraints on the values for $\phi_{+}^K$, so generically we will not be able to have $\phi_{+}^K \neq 0$ unless $|Z| > h^0(\mathcal{L}_1^{-1}\otimes\mathcal{L}_2\otimes K_X)$. This can be understood as the physical counterpart of the constraint imposed on $\hat{\xi}$ by the map $Y$ on the sequence \eqref{seqY}, by identifying $\{\phi_+^K\}_{K=1,\ldots,|Z|}=\pi(\hat{\xi})$.

Regarding the constraints found for constructing the Higgs field, they can also be understood in terms of \eqref{yuk37}. On the one hand, the F-flatness conditions corresponding to the fields $\phi_{73\, -}^K$ and $\phi_{37\, -}^K$ read
\be
v (P_K)\,  \phi_{+}^K = 0 \, , \quad \forall K\, .
\label{vyukis}
\ee
Because by assumption $\phi_{+}^K \neq 0$, this implies that $v (P_K) =0$, which reproduces that Higgs fields are constructed from sections of the canonical that vanish on $Z$, $i^*(\hat{v})\in H^0(K_S\otimes I_Z)$. T hese sections must be such that the F-terms for $a_-$ arising from the first line of \eqref{yuk37} also vanish, which leads to $i^*(\hat{v}) \in \ker \Lambda_3$ and to the sufficient condition \eqref{suffvis}. Notice that the fields $\phi_{73\, -}^K$, $\phi_{37\, -}^K$ also couple to the sections $m$, but since they do it quadratically and their vev vanish no constraint is imposed on them, as expected from our  analysis. 

Conversely, the sections $p \in H^0(\mathcal{L}_1^{-1}\otimes\mathcal{L}_2\otimes K_X)$ get the most stringent constraint, as they couple quadratically to the fields $\phi_{73\, +}^K$, $\phi_{37\, +}^K$. The F-terms for the latter vanish if
\be
p (P_K)\,   \phi_{+}^K   = 0\, , \quad \forall K\, ,
\ee
which selects sections that vanish on $Z$. While this is similar to the constraint \eqref{vyukis}, now an extra one appears due to the quartic couplings coming from the expansion \eqref{taylor}. One finds that the F-flatness condition for the fields $\phi_{33, K}$ reads 
\be
\p p (P_K)\,   ( \phi_{+}^K )^2 = 0\, , \quad \forall K\, ,
\ee
or in other words the sections $p$ vanish quadratically on $Z$. We therefore reproduce our previous result that $p \in H^0(\mathcal{L}_1^{-1}\otimes \mathcal{L}_2 \otimes K_X\otimes I^2_Z)$. Finally, due to the first line in \eqref{yuk37} we have that $p \in \ker \Lambda_+$, from where the sufficient condition \eqref{suffpis} is obtained.

\section{Explicit examples}
\label{s:examples}

If $S$ is a K\"ahler surface, by Corollary \ref{VTC} the existence of T-branes forces it to be properly elliptic or of general type. In this section we consider the case of a very simple properly elliptic surface that is a product of two curves. The construction presented below can nonetheless be easily adapted to more general properly elliptic surfaces.

Suppose that $S = C \times T$ where $C$ is a curve of genus $\geq 2$ and $T$ is an elliptic curve. Then, $S$ is a properly elliptic surface with trivial elliptic fibration given by projection onto the first factor $\pi := {\rm pr}_1: S \rightarrow C$. Note that $K_S \simeq \pi^*K_C$. We assume for now that $C$ is {\em not} hyperelliptic, implying that $g \geq 3$. 

For any $P_0 \in C$, we set $T_0 := \{ P_0 \} \times T$ and 
\begin{equation} 
K_S(nT_0) := \mathcal{O}_S(nT_0) \otimes K_S
\end{equation}
for all $n \in \mathbb{Z}$. 
Consider a rank-2 vector bundle $E$ on $S$ that fits into the exact sequence
\begin{equation}\label{example}
0 \rightarrow \cL \xrightarrow{i} E \xrightarrow{q} \cL^{-1} \otimes I_Z \rightarrow 0\, ,
\end{equation}
with $\cL = \mathcal{O}_S(-sT_0)$ for some positive integer $s$ and $Z$ a finite set of distinct points on $S$. 
Note that $c_1(E) = 0$ and $c_2(E) = l(Z)$,
where $l(Z)$ is the number of points in $Z$.
Moreover, $E$ is stable. This is because the only line bundles mapping into $E$ are of the form $\cL(-D)$ for some effective divisor $D$ on $S$ and 
\begin{equation}
\mu(\cL(-D)) = \deg \cL(-D) < 0 = \deg(E)/2 = \mu(E)\,. 
\end{equation}
Any Higgs field $\phi \in H^0(S,\End(E) \otimes K_S)$ of $E$ thus gives rise to a stable pair $(E,\phi)$. We are interested in traceless ones, which correspond to the elements of $H^0(\End_0(E) \otimes K_S)$.

Referring to section \ref{s:sufficient}, any element $\phi \in H^0(\End_0(E) \otimes K_S)$ can be written as
\begin{equation} 
\phi = (\tilde{i}_*(\alpha) - \tfrac12 \Id_E \otimes i^*(\alpha)) + \psi
\end{equation}
for some $\alpha \in H^0(\Hom(E, \cL \otimes K_S))$ and $\psi \in H^0(\End_0(E) \otimes K_S)$ such that 
\begin{equation} 
\tilde{q}_*(\psi) \in H^0(\Hom(E,\cL^{-1} \otimes I_Z \otimes K_S))/q^*(H^0(K_S))\,,
\end{equation}
where $\tilde{i} = i \otimes \Id_{K_S}$ and $\tilde{q} = q \otimes \Id_{K_S}$. 
Moreover, $\alpha$ can be written as 
\begin{equation}  
\alpha  = q^*(m) + \hat{v}  
\end{equation}
for some
\begin{equation}
m \in H^0(\cL^2 \otimes K_S)\,
\end{equation} 
and
\begin{equation} 
\hat{v} \in H^0(\Hom(E, \cL \otimes K_S))/q^*(H^0(\cL^2 \otimes K_S))\,. 
\end{equation}

In this case,
\begin{equation}  
H^0(\cL^2 \otimes K_S) = H^0(S,K_S(-2sT_0))
\end{equation}
and
\begin{equation}
h^0(S,K_S(-2sT_0)) \geq g-2s
\end{equation}
by Proposition \ref{cohomology-on-surface}.
Therefore, $H^0(\cL^2 \otimes K_S) \neq 0$ when $s < 2g$,
in which case there exist traceless Higgs fields with $m \neq 0$. In particular, such Higgs fields exist when $s=1$ since $g \geq 3$. Moreover, $[\phi,\phi^\dagger] \neq 0$ whenever $m \neq 0$.

On the other hand,
\begin{equation} 
h^0(S ,K_S \otimes I_Z) < g
\end{equation}
since $h^0(S,K_S) = h^0(C,K_C) =g$ and the canonical linear system $|K_C|$ has no base points because $g \geq 2$ (see \cite[IV, Lemma 5.1]{Hartshorne}). Moreover,
\begin{equation} 
h^1(S,\cL^2 \otimes K_S) = h^1(S,K_S(-2sT_0)) \geq g-2s + 1 \geq g-1
\end{equation}
by Proposition \ref{cohomology-on-surface}.
Consequently, $h^1(S,\cL^2 \otimes K_S) \geq  h^0(S ,K_S \otimes I_Z)$ for all $s \geq 1$, implying that there is no guarantee that there exist traceless Higgs fields with $\hat{v} \neq 0$ in that case.

To summarise, we have:

\begin{proposition}
	The rank-2 vector bundle $E$ given by the non-trivial extension \eqref{example} is stable with $c_1(E) = 0$ and $c_2(E) = l(Z)$. In particular, $(E,\phi)$ is a stable Higgs pair for all Higgs fields $\phi \in \Gamma(\End(E) \otimes K_S)$. Moreover, $E$ admits non-zero traceless Higgs fields $\phi$ such that $[\phi,\phi^\dagger] \neq 0$ whenever $s < g/2$. 
\end{proposition}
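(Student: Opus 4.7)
The plan is to verify the three assertions of the proposition in turn: the computation of the Chern numbers of $E$, the bundle stability of $E$ (which immediately promotes $(E,\phi)$ to a stable Higgs pair for every $\phi$), and, for $s<g/2$, the existence of a non-zero traceless Higgs field $\phi$ with $[\phi,\phi^\dagger]\neq 0$.

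For the first two assertions I would apply the Whitney product formula to the sequence \eqref{example}. Using $c_1(I_Z)=0$, $c_2(I_Z)=l(Z)$, and the vanishing $T_0\cdot T_0=0$ (since $T_0$ is a fibre of $\pi=\mathrm{pr}_1$ on $S=C\times T$), one obtains $c_1(E)=0$ and $c_2(E)=l(Z)$. Stability of $E$ follows as sketched in the excerpt: every rank-$1$ subsheaf factors through $\cL$ (this is where the non-triviality of the extension enters), hence is of the form $\cL(-D)$ for some effective divisor $D$, with slope bounded above by $\deg\cL=-s\,(T_0\cdot\omega)<0=\mu(E)$. Bundle stability is strictly stronger than Higgs stability, so $(E,\phi)$ is automatically a stable Higgs pair for every $\phi\in H^0(\End(E)\otimes K_S)$.

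The substantive step is the third. Exploiting the decomposition at the end of Section \ref{s:sufficient}, I would construct the simplest non-trivial example, a nilpotent Higgs field of the form $\phi=\tilde{i}_*(q^*(m))$ with $m\in H^0(\cL^2\otimes K_S)=H^0(S,K_S(-2sT_0))$; such a $\phi$ is traceless and satisfies $\phi^2=0$ thanks to $q\circ i=0$, and is non-zero as soon as $m\neq 0$. The existence of a non-zero $m$ reduces to the cohomological bound $h^0(S,K_S(-2sT_0))\ge g-2s$ provided by Proposition \ref{cohomology-on-surface} of Appendix \ref{ap:cohomology}, which is strictly positive precisely when $s<g/2$. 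Finally, $[\phi,\phi^\dagger]\neq 0$ follows pointwise: at any point $x\in S$ where $m(x)\neq 0$, the endomorphism part of $\phi$ is a non-zero nilpotent $2\times 2$ matrix, and a non-zero nilpotent matrix is never normal, so its commutator with its conjugate transpose is non-vanishing at $x$. The main conceptual obstacle is isolating the correct cohomological input that sharpens the threshold to $s<g/2$; once that bound is available, everything else is Chern-class bookkeeping combined with the machinery already assembled in Section \ref{s:sufficient}.
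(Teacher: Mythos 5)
Your proposal is correct and follows essentially the same route as the paper: Whitney-formula bookkeeping for $c_1(E)=0$ and $c_2(E)=l(Z)$ (using $T_0\cdot T_0=0$), stability via bounding the slope of rank-one subsheaves of the form $\cL(-D)$, and existence of the nilpotent field $\phi=\tilde{i}_*(q^*(m))$ from $h^0(S,K_S(-2sT_0))\geq g-2s>0$ for $s<g/2$ via Proposition \ref{cohomology-on-surface}. Your explicit pointwise justification that a non-zero nilpotent endomorphism is never normal is a welcome elaboration of the paper's unargued claim that $[\phi,\phi^\dagger]\neq 0$ whenever $m\neq 0$.
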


\begin{remark}
	Note that the above construction also works when the curve $C$ is hyperelliptic and $s=1$. The computations are similar and we only state the main points.
	
	First of all, because the curve is hyperelliptic, its genus can be 2. Therefore, $g \geq 2$. Moreover, it admits a double covering of $\mathbb{P}^1$, implying that there exists a meromorphic function on $C$ of degree 2 (although it still does not admits a meromorphic function of degree 1 since it is not isomorphic to $\mathbb{P}^1$). In this case, we thus have
	\begin{equation} 
	h^0(C,\mathcal{O}_C(2P_0))) = 2
	\end{equation}
	for all $P_0 \in C$. By Riemann-Roch, $h^1(C,\mathcal{O}_C(2P_0)) = g-1$ so that 
	\begin{equation} 
	h^0(S,K_S(-2T_0)) = g-1 > 0
	\end{equation}
	and 
	\begin{equation} 
	h^1(S,K_S(-2T_0)) = g+1 > g =  h^0(S,K_S) > h^0(S,K_S \otimes I_Z).
	\end{equation}
	This means that the stable rank-2 vector bundle given by \eqref{example} with $s=1$ admits a non-zero traceless Higgs field $\phi$ with $m \neq 0$ so that $[\phi,\phi^\dagger] \neq 0$. There is, however, again no guarantee that there exists a traceless Higgs field such that $\hat{v} \neq 0$.  
\end{remark}

\bigskip
\noindent
{\em Summary of examples.} Suppose that $S = C \times T$ with $C$ a curve of genus $g \geq 2$ and $T$ an elliptic curve. Let us assume that $\det E \simeq \mathcal{O}_S$ so that $c_1(E)=0$. We have two cases:
\begin{enumerate}
	\item {\em $C$ is hyperelliptic:}
	\begin{itemize}
		\item If $g=2$ and $c_2 = 0$, then one can pullback Hitchin's examples of stable Higgs pairs on $C$ to $S$, see \cite[Example 3.13]{Hitchin:1986vp}. 
		\item If $g \geq 2$ and $c_2 \geq 1$, consider the rank-2 vector bundle $E$ given by
		\begin{equation} 
		0 \rightarrow \mathcal{O}_S(-T_0) \rightarrow E \rightarrow \mathcal{O}_S(T_0) \otimes I_Z \rightarrow 0
		\end{equation}
		with $Z$ a finite set of $c_2$ distinct points in the support of $K_S$. Then, $E$ is stable and admits a non-zero traceless Higgs field $\phi$ with $[\phi,\phi^\dagger] \neq 0$.
	\end{itemize}
	\item {\em $C$ is not hyperelliptic:}  In this case, $g \geq 3$ (because every curve of genus 2 is hyperelliptic). Then, if $c_2 \geq 1$, the rank-2 vector bundle $E$ given by
	\begin{equation} 
	0 \rightarrow \mathcal{O}_S(-sT_0) \rightarrow E \rightarrow \mathcal{O}_S(sT_0) \otimes I_Z \rightarrow 0
	\end{equation}
	with $Z$ a finite set of $c_2$ distinct points in the support of $K_S$ and $s$ a positive integer. Then, $E$ is stable and admits a non-zero traceless Higgs field $\phi$ with $[\phi,\phi^\dagger] \neq 0$ whenever $s < g/2$. 
	
\end{enumerate}

\section{Conclusions}
\label{s:conclu}

In this paper we have proven a new vanishing theorem for regular solutions to the Vafa-Witten equations \cite{Vafa:1994tf}: No solutions exist with $[\phi,\phi^\dagger]\neq0$ on compact K\"ahler surfaces with either positive or vanishing Ricci curvature. Our main result is expressed in Theorem \ref{VT}, which in fact holds more generally for complex manifolds of any dimensions. Contrary to previous findings in the math and the physics literature \cite{Tanaka:2017jom,Tanaka:2017bcw,Marchesano:2017kke}, we have made no assumptions on the Higgs pair $(E,\phi)$, such as $\mathbb{C}^*$-fixed Higgs-field configuration or split vector bundle. Our conclusions simply follow from holomorphicity of the Higgs field and polystability of the Higgs pair, as required by the Hitchin-Kobayashi correspondence.

One should stress that our vanishing theorem is different in nature to the one derived in \cite{Vafa:1994tf}. There, in the case of a K\"ahler surface $S$, it was found that a necessary condition for solutions with $\phi \neq 0$ is that $\int_S R \tr (\phi_u\phi_u^\dag) \leq 0$, with $R$ the scalar curvature and $\phi_u$ the Higgs field in the unitary gauge. In cases where the presence of $\tr (\phi_u\phi_u^\dag)$ does not change the sign of the integral, like for instance when $R$ does not change sign along $S$, this matches with the first part of Theorem \ref{VT} applied to K\"ahler surfaces, because  $\int_S R\, \d{\rm vol} = - \deg(K_S)$. In general, however, these are two different conditions, whose precise connection would be interesting to understand. For surfaces such that $\int_S R \, \d{\rm vol} >0$ our vanishing result implies that there is no Higgs field solving the Vafa-Witten equations, and so necessarily $\int_S R \tr (\phi_u\phi_u^\dag) = 0$. When $K_S$ is trivial our results imply that $\phi_u$ is proportional to the identity, and so again $\int_S R \tr (\phi_u \phi_u^\dag)  = 0$. Finally, when $\int_S R \, \d{\rm vol} < 0$ our analysis does not provide any obvious statement on $\int_S R \tr (\phi_u \phi_u^\dag)$, although it leads to the inequalities \eqref{ineq}.  Therefore, it would be particularly interesting to develop the interplay between both vanishing results in this case.

Our results have important implications for the physics of wrapped D-branes in string theory. More precisely, solutions with $[\phi,\phi^\dagger]\neq0$ correspond to special supersymmetric brane bound states, called T-branes \cite{Cecotti:2010bp}, whereby $\phi$ is identified with the field describing deformations of a stack of D-branes in two real transverse dimensions. The conditions on the Ricci curvature we found are interpreted as geometric constraints imposed by the mere existence of such T-branes as stable BPS vacua in string compactifications.
T-branes are nowadays a key ingredient in string theoretic constructions of phenomenologically viable models of Grand Unification. In such constructions, T-branes may be required to wrap manifolds with positive Ricci curvature. Our results forbid this circumstance, unless there are regions where the Higgs field fails to be holomorphic. Indeed, poles may be induced by considering defects, which in turn correspond to additional D-branes present in the compactification and intersecting the T-brane. This situation has been explored in \cite{Marchesano:2019azf} in the special case of vector bundles topologically given by sum of line bundles. It would be important to extend these setups to general meromorphic Higgs pairs, using the techniques developed in the present paper.

In Section \ref{s:sufficient} we have given a complete characterisation of rank-$2$ stable Higgs pairs on a projective variety, dividing the discussion in two cases: The extension by a line bundle of $a)$~another line bundle and $b)$~of a non-locally-free sheaf. The first case constitutes a class of topologically split vector bundles but holomorphically non-split, whereas the second class contains topologically non-split bundles, due to the presence of the ideal sheaf. In both cases, we describe the Higgs field using a grid of long exact sequences in cohomology, deriving sufficient topological conditions for its existence. We finally provided a physical interpretation of these cohomological results, in terms of superpotentials and holomorphic Yukawa couplings, which characterise the low-energy effective physics of  D7-branes wrapped on surfaces. 

Particularly intriguing from the physics perspective is the ideal sheaf twist, which we interpreted as the result of having coupled the D7-brane system to a system of D3-branes located at points on the K\"ahler surface. These sorts of D3/D7 bound states have been argued in \cite{Collinucci:2014qfa} to be the origin of the so-called point-like matter,\cite{Cecotti:2010bp}, one of the most exotic features of T-branes. While our analysis appears to be compatible with the findings of \cite{Collinucci:2014qfa}, it would be interesting to investigate the connection to point-like matter further, especially by including the information of D-terms. Moreover, the physical meaning of the extension class \eqref{extclassideal} remains rather obscure, in particular for what concerns the role of $\hat{\xi}$, carrying the information of the localised gluing modes, and how the latter appears in the gauge connection. We hope to come back to these issues in the near future.

\bigskip

\centerline{\bf  Acknowledgments}

\bigskip

We would like to thank Luis \'Alvarez-C\'onsul, Lara Anderson, Andr\'es Collinucci, Mario Garc\'ia-Fern\'andez, Oscar Garc\'ia Prada, Yuuji Tanaka, Richard P. Thomas and Eric Sharpe for useful discussions. FM is supported by the Spanish Research Agency through the grants SEV-2016-0597 and PGC2018-095976-B-C21 from MCIU/AEI/FEDER, UE. RM was partially supported by an NSERC Discovery Grant. RS is supported by the program Rita Levi Montalcini for young researchers (D.M. n. 975, 29/12/2014). We gratefully acknowledge support from the Simons Center for Geometry and Physics, Stony Brook University, where this research was initiated. RM and RS would also like to thank IFT-Madrid for kind  hospitality at various stages of this project.

\appendix

\section{Some homological algebra}
\label{ap:homalgebra} 

In this appendix, we explain in more detail some of the constructions appearing in Section \ref{s:sufficient} and prove the various intermediate steps therein.

First, we would like to prove the following fact, which is needed to compute the trace of the Higgs field, like in Eq. \eqref{traceformula}:
\begin{lemma}\label{LemmaA1}
If $\alpha \in H^0(\Hom(E,\mathcal{L}_1 \otimes K_X))$, then $\tr(\tilde{i}_*(\alpha)) = \alpha \circ i$ where $\tilde{i}_*(\alpha) = (i \otimes \Id_{K_X}) \circ \alpha \in H^0(\End(E) \otimes K_X)$.
\end{lemma}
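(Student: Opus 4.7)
The plan is to reduce the identity to a pointwise (local) statement and then invoke the cyclicity of the trace. Observe that $\alpha\colon E\to\mathcal{L}_1\otimes K_X$ and $i\colon\mathcal{L}_1\to E$ can be composed in the two possible orders: the composition $i\otimes\Id_{K_X})\circ\alpha=\tilde{i}_*(\alpha)$ is a $K_X$-twisted endomorphism of $E$, while $\alpha\circ i$ is a $K_X$-twisted endomorphism of the line bundle $\mathcal{L}_1$, which is canonically an element of $\Hom(\mathcal{L}_1,\mathcal{L}_1\otimes K_X)\simeq K_X$. The cyclicity of the trace then formally gives $\tr_E\!\bigl(\tilde{i}_*(\alpha)\bigr)=\tr_{\mathcal{L}_1}(\alpha\circ i)=\alpha\circ i$, where the last equality is because the trace of an endomorphism of a line bundle is that endomorphism itself. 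This is the conceptual backbone of the proof; the remainder is just making this precise on sections.

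Concretely I would proceed as follows. First, work over a sufficiently small open $U\subset X$ on which $\mathcal{L}_1$, $E$, $\mathcal{L}_2$ and $K_X$ are all trivial, and on which the short exact sequence defining $E$ splits (which is automatic for locally free sheaves of finite rank). Pick a local generator $\ell$ of $\mathcal{L}_1$ and extend $e_1:=i(\ell)$ to a local frame $\{e_1,e_2\}$ of $E$, with dual coframe $\{e^1,e^2\}$. Writing $\alpha=e^1\otimes\beta_1+e^2\otimes\beta_2$ with $\beta_j\in\Gamma(U,\mathcal{L}_1\otimes K_X)$, and letting $\beta_j=\ell\otimes f_j$ with $f_j\in\Gamma(U,K_X)$, a direct computation gives
\[
\tilde{i}_*(\alpha)\;=\;(i\otimes\Id_{K_X})\circ\alpha\;=\;e^1\otimes e_1\otimes f_1+e^2\otimes e_1\otimes f_2
\quad\in\;\Gamma(U,\End(E)\otimes K_X).
\]
Reading off the trace yields $\tr\bigl(\tilde{i}_*(\alpha)\bigr)=f_1$. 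On the other hand $(\alpha\circ i)(\ell)=\alpha(e_1)=\beta_1=\ell\otimes f_1$, so under the identification $\Hom(\mathcal{L}_1,\mathcal{L}_1\otimes K_X)\simeq K_X$ we have $\alpha\circ i=f_1$ as well.

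The main obstacle is purely bookkeeping: one must track faithfully the various tensor identifications (in particular the canonical isomorphism $\Hom(\mathcal{L}_1,\mathcal{L}_1\otimes K_X)\simeq K_X$ obtained by contracting with $\ell$ and $\ell^{-1}$) and verify that the local formula for $f_1$ is independent of the choice of splitting used to build the frame $\{e_1,e_2\}$, so that the identity descends to a global equality of sections of $K_X$. This follows because both $\tr\bigl(\tilde{i}_*(\alpha)\bigr)$ and $\alpha\circ i$ are defined intrinsically, and they agree on every trivializing chart by the computation above; hence they coincide as global sections of $K_X$.
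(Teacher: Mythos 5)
Your proof is correct, and its core is the same as the paper's: a local computation in a trivializing chart. Two points of comparison are worth making. First, your opening observation --- that the identity is just cyclicity of the trace, $\tr_E\bigl((i\otimes\Id_{K_X})\circ\alpha\bigr)=\tr_{\mathcal{L}_1}(\alpha\circ i)$, combined with the fact that the trace of a twisted endomorphism of a line bundle is the endomorphism itself --- is a cleaner, frame-free argument than anything in the paper, and since $\tr(uv)=\tr(vu)$ holds pointwise for arbitrary linear maps it needs no hypotheses on $i$ at all. Second, your explicit verification differs from the paper's in the choice of frame: you take $e_1:=i(\ell)$ and extend it to a local frame of $E$, which requires the sequence to split locally, i.e.\ requires $i$ to be a subbundle inclusion. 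That is fine for the untwisted extension \eqref{ext2} (the quotient is locally free, so local splitting is automatic, as you say), but the paper also invokes this lemma for the ideal-sheaf-twisted sequence \eqref{extis}, where $i(\ell)$ vanishes at the points of $Z$ and no adapted frame exists there. The paper's computation sidesteps this by working in an \emph{arbitrary} trivialization of $E$, writing $i(f)=(fh_1,fh_2)$ and obtaining $\tr(\tilde{i}_*(\alpha))=h_1\alpha_1+h_2\alpha_2=\alpha\circ i$ --- a formula that remains valid where $h_1$ and $h_2$ vanish simultaneously. So if you want your explicit computation to cover the twisted case, either fall back on the cyclicity argument (which already does) or redo the matrix computation in a frame not adapted to $i$.
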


\begin{proof}
The proof consists of a local computation. First note that all the sheaves involved, namely, $\mathcal{L}_1$, $E$ and $K_X$, are locally free. Every point in $X$ thus admits an open neighbourhood $U$ over which $\mathcal{L}_1|_U \simeq \mathcal{L}_1 \otimes K_X|_U \simeq \mathcal{O}_X(U)$ and $E|_U \simeq E \otimes K_X|_U\simeq \mathcal{O}_X(U) \oplus \mathcal{O}_X(U)$.
The first terms of the extension \eqref{ext2}, when restricted to $U$, then become
\begin{equation} 
0  \xrightarrow{} \mathcal{O}_X(U) \xrightarrow{i}  \mathcal{O}_X(U) \oplus \mathcal{O}_X(U)\xrightarrow{} \cdots. 
\end{equation}
By $\mathcal{O}_X$-linearity of $i$, this means that, for all $f \in \mathcal{O}_X(U)$, $i(f) = (fh_1,fh_2)$ for some $h_1,h_2 \in \mathcal{O}_X(U)$.\footnote{Note that this Lemma equally holds in the case of the twisted sequence \eqref{extis}, where $Z$ is the locus where $h_1$ and $h_2$ vanish simultaneously.}
Now, for any $\alpha \in H^0(\Hom(E,\mathcal{L}_1 \otimes K_X))$, we have  
\begin{equation} 
\alpha|_U \in \Hom(E,\mathcal{L}_1 \otimes K_X)(U) = \Hom_{\mathcal{O}_X(U)}(\mathcal{O}_X(U) \oplus \mathcal{O}_X(U),\mathcal{O}_X(U))
\end{equation}
so that $\alpha|_U(f,g) = \alpha_1f+\alpha_2g$ for some $\alpha_1,\alpha_2 \in \mathcal{O}_X(U)$, for all $f,g \in \mathcal{O}_X(U)$. In particular,
\begin{equation} 
\alpha \circ i|_U = h_1\alpha_1 + h_2\alpha_2\,.
\end{equation}
At the same time, 
\begin{align} 
\tilde{i}_*(\alpha)|_U(f,g) &= (i \otimes \Id_{K_X}) \circ \alpha|_U(f,g)
= (i \otimes_{\mathcal{O}_X(U)} \Id_{K_X}) (\alpha_1f+\alpha_2g)\\
 &= i(\alpha_1f+\alpha_2g) = (\alpha_1h_1f + \alpha_2h_1g,\alpha_1h_2f+\alpha_2h_2g)\,. 
\end{align}
In other words,
\begin{equation} 
\tilde{i}_*(\alpha)|_U = \begin{pmatrix}
h_1\alpha_1 & h_1\alpha_2\\
h_2\alpha_1 & h_2\alpha_2
\end{pmatrix} \in \End_{\mathcal{O}_X(U)}(\mathcal{O}_X(U) \oplus \mathcal{O}_X(U)) = \End(E) \otimes K_X(U)\,,
\end{equation}
and $\tr(\tilde{i}_*(\alpha)|_U) = h_1\alpha_1 + h_2\alpha_2 = \alpha \circ i|_U$, proving the result.  \\
\end{proof}

In what follows, we fill in some gaps left in Section \ref{s:sufficient}, when discussing bundle extensions twisted by an ideal sheaf.
Let $E$ be a rank-2 vector bundle on a K\"ahler surface $S$. Suppose that $E$ is given by the extension 
\begin{equation}\label{extension-ideal}
0 \rightarrow \mathcal{L}_1 \xrightarrow{i} E \xrightarrow{q} \mathcal{L}_2 \otimes I_Z \rightarrow 0\,,
\end{equation}
where $\mathcal{L}_1$ and $\mathcal{L}_2$ are line bundles on $S$, and $Z$ is a finite set of distinct points in $S$. In other words, $E$ is a locally-free extension of $\mathcal{L}_2 \otimes I_Z$ by $\mathcal{L}_1$.

A first question to address is when such locally-free extensions exist. Suppose that $Z = \{ P_1,\dots,P_n\}$ consists of distinct (reduced) points. A locally-free extension of $\cL_2 \otimes I_Z$ by $\cL_1$ then exists if and only if every section of $\cL_1^{-1} \otimes  \cL_2\otimes K_S$ that vanishes at all but one of the $P_i$'s also vanishes at the remaining point (see \cite{Friedman}, Theorem 12). Note that this condition is vacuously satisfied when $Z$ is a single point. Moreover, if one takes sufficiently many points $\{ P_1,\dots,P_n\}$ in general position on $S$, then there will be no section of $\cL_1^{-1} \otimes  \cL_2\otimes K_S$ that vanishes at all but one of the $P_i$'s. One can therefore construct many examples of locally free extensions of $\cL_2 \otimes I_Z$ by $\cL_1$.

Let us now explain how we obtained the exact sequences \eqref{1strowideal} and \eqref{3strowideal}. These in fact correspond to the long exact cohomology sequences associated to the short exact sequences of sheaves given in the following:

\begin{proposition}\label{short-exact}
Let $E$ be a locally-free sheaf given by an extension of the form \eqref{extension-ideal}. We then have the following short exact sequences of sheaves:
\begin{equation}\label{short-exact1}
0 \rightarrow \cL_2^{-1} \otimes \cL_1 \otimes K_S \xrightarrow{q^*}
\Hom(E,\cL_1 \otimes K_S) \xrightarrow{i^*} K_S \otimes I_Z \rightarrow 0
\end{equation}
and
\begin{equation}\label{short-exact2}
0 \rightarrow K_S \xrightarrow{q^*}
\Hom(E,\cL_2 \otimes K_S\otimes I_Z ) \xrightarrow{i^*} \cL_1^{-1} \otimes \cL_2 \otimes K_S \otimes I_Z^2 \rightarrow 0\,.
\end{equation}
\end{proposition}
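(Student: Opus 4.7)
The overall strategy is to apply the contravariant functor $\mathcal{H}om(-,\mathcal{F})$ to the defining extension \eqref{extension-ideal} for the two choices $\mathcal{F}=\mathcal{L}_1\otimes K_S$ and $\mathcal{F}=\mathcal{L}_2\otimes K_S\otimes I_Z$, and then identify the image of the pullback map $i^*$ on the nose using a local computation modeled on Lemma \ref{LemmaA1}. Left-exactness of $\mathcal{H}om(-,\mathcal{F})$ will take care of injectivity of $q^*$ and of the identification of its image with $\ker i^*$; the only non-trivial input is thus the identification of $\Image i^*$, which will force the ideal-sheaf twists appearing on the right-hand sides of \eqref{short-exact1} and \eqref{short-exact2}.

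For the first sequence I would proceed as follows. First, since $\mathcal{L}_2$ is invertible and $I_Z$ is rank-one reflexive on a smooth surface, $\mathcal{H}om(\mathcal{L}_2\otimes I_Z,\mathcal{L}_1\otimes K_S)\simeq \mathcal{L}_2^{-1}\otimes\mathcal{L}_1\otimes K_S\otimes\mathcal{H}om(I_Z,\mathcal{O}_S)\simeq \mathcal{L}_2^{-1}\otimes\mathcal{L}_1\otimes K_S$, which takes care of the left term. Next I would work locally near a point $P\in Z$: using that $E$ is locally free of rank $2$, choose a trivialisation of $E$ near $P$ in which $i$ is given by a column $(h_1,h_2)^T$, with $h_1,h_2\in\mathcal{O}_{S,P}$ generating the stalk $(I_Z)_P$ (this is possible precisely because the extension is locally non-split at $P$, i.e.\ because $E$ is locally free there). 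Then any local section $\alpha\in\mathcal{H}om(E,\mathcal{L}_1\otimes K_S)$ reads $\alpha=(\alpha_1,\alpha_2)$ with $\alpha_j$ local sections of $K_S$, and $i^*(\alpha)=h_1\alpha_1+h_2\alpha_2\in I_Z\cdot K_S$. Since $h_1,h_2$ generate $I_Z$, every local section of $K_S\otimes I_Z$ is hit, so $\Image i^*=K_S\otimes I_Z$. Away from $Z$ the extension is locally split, so $i^*$ is locally surjective onto $K_S$, consistently with $I_Z|_{S\setminus Z}=\mathcal{O}$. Finally, the kernel computation is immediate from the Koszul relation: $h_1\alpha_1+h_2\alpha_2=0$ forces $(\alpha_1,\alpha_2)=\gamma(-h_2,h_1)$ for a unique $\gamma\in\mathcal{L}_2^{-1}\otimes\mathcal{L}_1\otimes K_S$, i.e.\ $\alpha=q^*(\gamma)$.

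The second sequence is handled by the same procedure, with the new feature that the target $\mathcal{F}=\mathcal{L}_2\otimes K_S\otimes I_Z$ is itself ideal-sheaf-twisted. Using $\mathcal{H}om(I_Z,I_Z)\simeq\mathcal{O}_S$ (which holds since any endomorphism of $I_Z$ extends to multiplication by a function on $S$, and any function preserves an ideal), one gets $\mathcal{H}om(\mathcal{L}_2\otimes I_Z,\mathcal{L}_2\otimes K_S\otimes I_Z)\simeq K_S$, identifying the left term. In the local model, a section $\beta$ of $\mathcal{H}om(E,\mathcal{L}_2\otimes K_S\otimes I_Z)$ is given by a row $(\beta_1,\beta_2)$ with $\beta_j\in I_Z\cdot K_S$, so that $i^*(\beta)=h_1\beta_1+h_2\beta_2$ automatically lies in $I_Z^{\,2}\cdot K_S$. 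Surjectivity onto $\mathcal{L}_1^{-1}\otimes\mathcal{L}_2\otimes K_S\otimes I_Z^{\,2}$ follows from the fact that, since $h_1,h_2$ form a regular sequence at $P$, the monomials $h_1^2,h_1h_2,h_2^2$ generate $(I_Z^{\,2})_P$, and these are precisely the images of $\beta=(h_1,0),(0,h_1),(0,h_2)$ (after multiplying by suitable coefficients in $K_S$). The kernel computation is identical to the one above, giving $\ker i^*=q^*(K_S)$.

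\paragraph{Main obstacle.} The only non-routine point is the identification of $\Image i^*$ with the smaller ideal-twisted subsheaf (rather than with the naive $\mathcal{H}om(\mathcal{L}_1,\mathcal{F})$). Equivalently, one must show that the connecting homomorphism into $\mathcal{E}xt^1(\mathcal{L}_2\otimes I_Z,\mathcal{F})$ is injective on the torsion quotient supported at $Z$. This is precisely where the assumption that $E$ is locally free, i.e.\ that the extension class $\xi$ is non-trivial at every point of $Z$ (cf.\ the criterion recalled before the proposition and the map $Y$ in \eqref{seqY}), enters: local freeness of $E$ is exactly what allows us to choose the local column $(h_1,h_2)^T$ with $h_1,h_2$ generating $(I_Z)_P$, which in turn produces the extra factor of $I_Z$ in \eqref{short-exact1} and of $I_Z^{\,2}$ in \eqref{short-exact2}.
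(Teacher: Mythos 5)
Your proposal is correct and follows essentially the same route as the paper: both arguments reduce to a local Koszul-type computation near the points of $Z$, where local freeness of $E$ forces $i$ to be given by a pair of functions generating $I_Z$, and this is precisely what produces the extra twists by $I_Z$ and $I_Z^2$ in the images of $i^*$ in \eqref{short-exact1} and \eqref{short-exact2}. One small inaccuracy worth fixing: $I_Z$ is torsion-free but \emph{not} reflexive (its double dual is $\mathcal{O}_S$ since $Z$ has codimension two); the identification $\mathcal{H}om(I_Z,\mathcal{O}_S)\simeq\mathcal{O}_S$ that you need is nevertheless correct, following from the fact that a homomorphism defined away from the codimension-two locus $Z$ extends over it, which is the content of the lemma from Friedman's book that the paper invokes.
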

Indeed, by taking the long exact sequence in cohomology of \eqref{short-exact1}, we obtain \eqref{1strowideal}, where $\Lambda_3$ denotes the connecting homomorphism.
By doing the same for \eqref{short-exact2}, we obtain \eqref{3strowideal}, with $\Lambda_+$ the connecting homomorphism.

\begin{proof}[Proof of Proposition \ref{short-exact}]
We begin with some local considerations. We first note that since we are assuming that $Z$ is a finite set of distinct points, it is enough to understand what happens on open sets that contain no points in $Z$ and on open sets that only contain one point in $Z$. 
Suppose then that $Z$ consists of a single point $P$ and that $(x,y)$ are local coordinates of $S$ centered at $P$, so that $I_Z$ is generated by $x$ and $y$ on the open set $U$ on which $x$ and $y$ are defined. Then $I_Z$ fits into the following exact sequence
\begin{equation}\label{ideal-resolution}
0 \rightarrow \mathcal{O}_S \xrightarrow{i} \mathcal{O}_S \oplus \mathcal{O}_S \xrightarrow{q} I_Z \rightarrow 0\,,
\end{equation}
with $i(1) = (x,y)$ and $q(a,b) = ay-bx$ for all local sections $(a,b)$ of $\mathcal{O}_S \oplus \mathcal{O}_S$ on $U$.

Applying $\Hom(\ - \ , \mathcal{O}_S )$ to \eqref{ideal-resolution}, we obtain the exact sequence of sheaves
\begin{equation}\label{first-piece-local}
0 \rightarrow \Hom(I_Z,\mathcal{O}_S) \xrightarrow{q^*}
\Hom( \mathcal{O}_S \oplus \mathcal{O}_S, \mathcal{O}_S) \xrightarrow{i^*} \Hom( \mathcal{O}_S, \mathcal{O}_S) \rightarrow
Ext^1(I_Z,  \mathcal{O}_S) \rightarrow 0\,.
\end{equation}
Note that $\Hom(I_Z,\mathcal{O}_S) \simeq \mathcal{O}_S$ with every local homomorphism $I_Z \rightarrow \mathcal{O}_S$ corresponding to multiplication by an element in $\mathcal{O}_S$ (see \cite{Friedman}, Lemma 7). In addition, we have the obvious isomorphisms $\Hom( \mathcal{O}_S \oplus \mathcal{O}_S, \mathcal{O}_S)  \simeq \mathcal{O}_S \oplus \mathcal{O}_S$ and
$\Hom(\mathcal{O}_S,\mathcal{O}_S) \simeq \mathcal{O}_S$.
Then, for every local homomorphism $(a,b):\mathcal{O}_S \oplus \mathcal{O}_S\to\cO_S$ represented by the pair $(a,b)\in\mathcal{O}_S \oplus \mathcal{O}_S$, we have $i^*((a,b))(1) = (a,b)\circ i(1)  = (a,b)\, {x\choose y} = ax+by$, implying that the map $i^*((a,b))$ is multiplication by $ax+by$ in $\mathcal{O}_S$. Nevertheless, $ax+by \in I_Z$, meaning that $i^*(\Hom( \mathcal{O}_S \oplus \mathcal{O}_S, \mathcal{O}_S)) = \Hom(\mathcal{O}_S,I_Z) \simeq I_Z$ and $Ext^1(I_Z,\mathcal{O}_S) \simeq \mathcal{O}_S/I_Z$. The first two terms of \eqref{first-piece-local} thus fit into the short exact sequence
\begin{equation}\label{first-piece-local-exact}
0 \rightarrow \Hom(I_Z,\mathcal{O}_S) \xrightarrow{q^*}
\Hom( \mathcal{O}_S \oplus \mathcal{O}_S, \mathcal{O}_S) \xrightarrow{i^*} \Hom( \mathcal{O}_S, I_Z) \rightarrow 0\,.
\end{equation}

Applying $\Hom(\ - \ , I_Z)$ to \eqref{ideal-resolution}, instead, we obtain the exact sequence of sheaves
\begin{equation}\label{local-picture-ideal}
0 \rightarrow \Hom(I_Z,I_Z) \xrightarrow{q^*}
\Hom( \mathcal{O}_S \oplus \mathcal{O}_S, I_Z) \xrightarrow{i^*} \Hom( \mathcal{O}_S, I_Z) \rightarrow
Ext^1(I_Z, I_Z) \rightarrow 0\,.
\end{equation}
We again have
$\Hom(I_Z,I_Z) \simeq \mathcal{O}_S$ with every local homomorphism $I_Z \rightarrow I_Z$ corresponding to multipication by an element in $\mathcal{O}_S$ (see \cite{Friedman}, Lemma 7). Furthermore, $\Hom(\mathcal{O}_S,I_Z) \simeq I_Z$ with every local homomorphism $\mathcal{O}_S\rightarrow I_Z$ corresponding to multiplication by an element in $I_Z$. Finally, $\Hom( \mathcal{O}_S \oplus \mathcal{O}_S, I_Z)  \simeq I_Z \oplus I_Z$ since every local homomorphism $\mathcal{O}_S \oplus \mathcal{O}_S \rightarrow I_Z$ is of the form $(f,g) \mapsto af+bg$ with $(a,b) \in I_Z\oplus I_Z$.
 We again have $i^*((a,b))(1) = (a,b)\circ i(1)  = (a,b)\, {x\choose y} = ax+by$, implying that the map $i^*((a,b))$ is multiplication by $ax+by$ in $\mathcal{O}_S$, but this time $ax+by \in I^2_Z$. Hence, $i^*(\Hom( \mathcal{O}_S \oplus \mathcal{O}_S, I_Z)) = \Hom(\mathcal{O}_S,I_Z^2) \simeq I_Z^2$ and $Ext^1(I_Z,I_Z) \simeq I_Z/I_Z^2$. Consequently, the first two terms of \eqref{local-picture-ideal} split into the short exact sequence
\begin{equation}\label{first-piece-local-exact-ideal}
0 \rightarrow \Hom(I_Z,I_Z) \xrightarrow{q^*}
\Hom( \mathcal{O}_S \oplus \mathcal{O}_S,I_Z) \xrightarrow{i^*} \Hom( \mathcal{O}_S, I_Z^2) \rightarrow 0\,.
\end{equation}

Let us now derive \eqref{short-exact1}.
Applying $\Hom(\ - \ ,\cL_1)$ to \eqref{extension-ideal}, we obtain the exact sequence of sheaves
\begin{equation}\label{first-piece}
0 \rightarrow \Hom(\cL_2 \otimes I_Z, \cL_1) \xrightarrow{q^*}
\Hom(E,\cL_1) \xrightarrow{i^*} \Hom(\cL_1,\cL_1) \rightarrow
Ext^1(\cL_2 \otimes I_Z, \cL_1) \rightarrow 0 \,.
\end{equation}
In light of the above discussion, we see that 
\begin{equation} 
i^*(\Hom(E,\cL_1)) = \Hom(\cL_1,\cL_1 \otimes I_Z)\,,
\end{equation}
giving rise to the short exact sequence
\begin{equation}\label{short-exact-hom1}
0 \rightarrow \Hom(\cL_2 \otimes I_Z, \cL_1) \xrightarrow{q^*}
\Hom(E,\cL_1) \xrightarrow{i^*} \Hom(\cL_1,\cL_1 \otimes I_Z) \rightarrow 0\,.
\end{equation}
To understand this sequence better, recall that for any two sheaves $\mathcal{E}_1, \mathcal{E}_2$, we have $\Hom(\mathcal{E}_1,\mathcal{E}_2) \simeq \mathcal{E}_1^* \otimes \mathcal{E}_2$. Moreover, from the above local discussion,
\begin{equation} 
\Hom(\cL_2 \otimes I_Z, \cL_1) \simeq  \Hom(\cL_2, \cL_1) \simeq \cL_2^{-1} \otimes \cL_1
\end{equation}
and
\begin{equation}  
\Hom(\cL_1,\cL_1 \otimes I_Z) \simeq \cL_1^{-1} \otimes \cL_1 \otimes I_Z \simeq I_Z\,,
\end{equation}
where we used that the dual of a line bundle coincides with its inverse.
The short exact sequence \eqref{short-exact-hom1} can therefore be written as
\begin{equation}
0 \rightarrow  \cL_2^{-1} \otimes \cL_1 \xrightarrow{q^*}
\Hom(E,\cL_1) \xrightarrow{i^*}  I_Z \rightarrow 0\,,
\end{equation}
which tensored by $K_S$ gives \eqref{short-exact1}, since
$\Hom(E,\cL_1) \otimes K_S \simeq (E^* \otimes \cL_1) \otimes K_S \simeq \Hom(E,\cL_1 \otimes K_S)$.

We now turn to \eqref{short-exact2}. Applying $\Hom(\ - \ , \cL_2 \otimes I_Z)$ to \eqref{extension-ideal}, we get the exact sequence {\small
\begin{equation}
0 \rightarrow \Hom(\cL_2 \otimes I_Z, \cL_2 \otimes I_Z) \xrightarrow{q^*}
\Hom(E,\cL_2 \otimes I_Z) \xrightarrow{i^*} \Hom(\cL_1,\cL_2 \otimes I_Z)
\rightarrow Ext^1(\cL_2 \otimes I_Z, \cL_2 \otimes I_Z)\rightarrow 0\,.
\end{equation}}
This time, 
\begin{equation} 
\Hom(\cL_2 \otimes I_Z, \cL_2 \otimes I_Z) \simeq  \Hom(\cL_2, \cL_2) \simeq \mathcal{O}_S
\end{equation}
and
\begin{equation} 
i^*(\Hom(E,\cL_2 \otimes I_Z)) = \Hom(\cL_1,\cL_2 \otimes I_Z^2)  \simeq \cL_1^{-1} \otimes \cL_2 \otimes I_Z^2\,,
\end{equation}
giving us the short exact sequence
\begin{equation}
0 \rightarrow \mathcal{O}_S \xrightarrow{q^*}
\Hom(E,\cL_2 \otimes I_Z) \xrightarrow{i^*} \cL_1^{-1} \otimes \cL_2 \otimes I_Z^2\rightarrow 0\,,
\end{equation}
which tensored by $K_S$ yields \eqref{short-exact2}.
\end{proof}

\section{Topologically non-split bundles}
\label{TopNonSp}

In this appendix, we will describe the topological features of a rank-$2$ vector bundle on a K\"ahler surface $S$, defined by the twisted extension \eqref{extis}. As we will see, twisting the sequence by an ideal sheaf gives us a simple way to construct a class of rank-$2$ vector bundles which may be topologically disconnected from a split form of direct sum of two line bundles.

Let $Z$ be a set of distinct points in $S$,\footnote{In the context of Type IIB Orientifold (or F-theory) compactifications to four dimensions, $Z$ corresponds to a set of D3-branes extended over the external space and completely localised in the internal part of the 7-brane worldvolume. Their orientation is fixed by supersymmetry.} with $j:Z\to S$ the corresponding holomorphic embedding. The Grothendieck-Riemann-Roch formula says:
\begin{equation}
{\rm ch}(j_*\mathcal{O}_Z){\rm td}(S)=j_{\#}({\rm ch}(\mathcal{O}_Z){\rm td}(Z))\,,
\end{equation}
where ``ch'' and ``td'' denote the total Chern character and the total Todd class respectively, $\mathcal{O}_Z$ is the structure sheaf of $Z$, $j_*$ is the push-forward map (naturally acting on sheaves), and finally $j_{\#}$ is the push-forward in cohomology, meaning that it consists of taking first a Poincar\'e duality, then the natural push-forward of homology classes, and then another Poincar\'e duality\footnote{Effectively this set of operations sends cohomology classes of $Z$ to cohomology classes of $S$ by simply wedging them with the class dual to $Z$ in $S$.}. The l.h.s.~of the above formula expands to
\begin{equation}
\left[{\rm rk}(j_*\mathcal{O}_Z)+c_1(j_*\mathcal{O}_Z)+\frac{1}{2}c_1^2(j_*\mathcal{O}_Z)-c_2(j_*\mathcal{O}_Z)\right]\left[1+\frac{1}{2}c_1(S)+\frac{1}{12}(c_1^2(S)+c_2(S))\right]\,,\nonumber
\end{equation}
whereas the r.h.s.~simply reads
\begin{equation}
{\rm PD}_S(Z)\,,\nonumber
\end{equation}
which is the four-class Poincar\'e dual to $Z$ in $S$. Equating order by order, we get useful topological information about the skyscraper sheaf $j_*\mathcal{O}_Z$:
\begin{equation}
{\rm rk}(j_*\mathcal{O}_Z) = c_1(j_*\mathcal{O}_Z) = 0\,, \qquad {\rm and}\qquad c_2(j_*\mathcal{O}_Z)= - \,{\rm PD}_S(Z)\,.
\end{equation}
Since $j_*\mathcal{O}_Z$ is by definition the cokernel of the embedding $I_Z\to \mathcal{O}_S$, with $\mathcal{O}_S$ the structure sheaf of $S$, we trivially get:
\begin{equation}
{\rm rk}(I_Z) = 1\,,\qquad c_1(I_Z) = 0\,, \qquad {\rm and}\qquad c_2(I_Z)=  {\rm PD}_S(Z)\,.
\end{equation}

Using the properties of the Chern character on exact sequences, it is immediate to verify that both the rank and the first Chern class of $E$ in \eqref{extis} are insensitive to the presence of $I_Z$, namely rk$(E)=2$ and $c_1(E)=c_1(\mathcal{L}_1)+c_1(\mathcal{L}_2)$. The second Chern number of $E$, however, crucially receives a shift determined by the number of points in $Z$, i.e.
\begin{equation}\label{c2E}
c_2(E)=c_1(\mathcal{L}_1)c_1(\mathcal{L}_2)+{\rm PD}_S(Z)\,.
\end{equation}

It is now easy to design an explicit example where a vector bundle $E$ given by \eqref{extis} is topologically obstructed to splitting into a sum of two line bundles. Consider, for instance, a surface $S$ cut out by a \emph{generic} polynomial of degree $7$ in $\mathbb{P}^3$.\footnote{We may physically motivate this in the context of F-theory, where $\mathbb{P}^3$ plays the role of base of the elliptic fibration.} This surface is simply connected and has $20$ independent holomorphic deformations. Moreover, even though $S$ admits $147$ independent harmonic $(1,1)$-forms, the genericity of the degree-7 polynomial implies that all of them but one be non-integral.\footnote{This is a trivial consequence of the N\"other-Lefschetz theorem, which in this case says that the restriction ${\rm Pic}(\mathbb{P}^3)\to{\rm Pic}(S)$ is an isomorphism.} This means that we can only play with the hyperplane class to construct line bundles on $S$. Take for simplicity $\mathcal{L}_2^{-1}\simeq \mathcal{L}_1\equiv \mathcal{L}$, and choose $c_1(\mathcal{L})=-H$, where $H$ is the hyperplane class of $\mathbb{P}^3$. This guarantees that $E$ is stable. Using that $c_1(S)=-3H$, we find that the number of points in $Z$ is $21$, and therefore, by Eq. \eqref{c2E}, we have $\int c_2(E)=14$. However, any traceless sum of two line bundles on this surface has a negative second Chern number, thus proving that the vector bundle $E$ so constructed does not admit a split form in its moduli space of complex structures.

\section{Cohomological computations}
\label{ap:cohomology}

In this appendix, we provide a few cohomological computations that are needed in Section \ref{s:examples}, in which we construct examples of Higgs bundles over properly elliptic surfaces.

Suppose that $S = C \times T$ where $C$ is a curve of genus $\geq 2$ and $T$ is an elliptic curve. Then, $S$ is a properly elliptic surface with trivial elliptic fibration given by projection onto the first factor $\pi := {\rm pr}_1: S \rightarrow C$. Note that $K_S \simeq \pi^*K_C$. Suppose that $C$ is {\em not} hyperelliptic. This means, in particular, that $g \geq 3$. In this case, we have:

\begin{proposition}\label{cohomology-on-surface}
	For any $P_0 \in C$, set $T_0 := \{ P_0 \} \times T$ and 
	\[K_S(nT_0) := \mathcal{O}_S(nT_0) \otimes K_S\] 
	for all $n \in \mathbb{Z}$. Then:
	\begin{enumerate}

		\item $h^0(S,K_S(sT_0)) 
		= h^1(S,K_S(sT_0)) = g+s-1$ for all $s>0$.
		
		\item $h^0(S,K_S(-sT_0)) = g-s$ and $h^1(S,K_S(-sT_0)) = g-s+1$ for $s=0,1,2$.
		
		\item $g-s \leq h^0(S,K_S(-sT_0)) \leq g-2$ and $g-s+1 \leq h^1(S,K_S(-sT_0)) \leq g+s-3$ for all $s \geq 3$.
	\end{enumerate}
\end{proposition}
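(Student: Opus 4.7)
My strategy is to reduce every computation to a statement on the curve $C$, and then bound the resulting curve cohomology using Riemann--Roch, Serre duality, and Clifford's theorem.

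The first step is to observe that $T_0 = \{P_0\} \times T$ is a fiber of $\pi$, so $\mathcal{O}_S(T_0) \simeq \pi^*\mathcal{O}_C(P_0)$, and that $K_T$ is trivial, so $K_S \simeq \pi^*K_C$. Consequently $K_S(nT_0) \simeq \pi^*\bigl(K_C(nP_0)\bigr)$ for every $n \in \mathbb{Z}$. Since $H^0(T,\mathcal{O}_T) \simeq H^1(T,\mathcal{O}_T) \simeq \mathbb{C}$, the Künneth formula (or Leray with trivial fibration) yields
\begin{equation}
h^0\bigl(S,K_S(nT_0)\bigr) = h^0\bigl(C,K_C(nP_0)\bigr),\qquad
h^1\bigl(S,K_S(nT_0)\bigr) = h^0\bigl(C,K_C(nP_0)\bigr) + h^1\bigl(C,K_C(nP_0)\bigr).
\end{equation}
By Serre duality on $C$, $h^1\bigl(C,K_C(nP_0)\bigr) = h^0\bigl(C,\mathcal{O}_C(-nP_0)\bigr)$, and Riemann--Roch then reads
\begin{equation}
h^0\bigl(C,K_C(nP_0)\bigr) - h^0\bigl(C,\mathcal{O}_C(-nP_0)\bigr) = g-1+n.
\end{equation}

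With this in hand, I would treat the three cases in turn. For part 1, when $s>0$ the degree of $K_C(sP_0)$ exceeds $2g-2$, so $h^1=0$ on $C$ and Riemann--Roch gives $h^0\bigl(C,K_C(sP_0)\bigr)=g-1+s$; this yields both equalities on $S$. For part 2, the cases $s=0,1,2$ follow from: $h^0(K_C)=g$, $h^1(K_C)=1$ (duality); the fact that $|K_C|$ is base-point-free for $g\ge 2$, giving $h^0(K_C-P_0)=g-1$ (hence $h^1=1$ by Riemann--Roch); and the non-hyperellipticity of $C$, which forces $h^0\bigl(\mathcal{O}_C(2P_0)\bigr)=1$, whence $h^0(K_C-2P_0)=g-2$ and $h^1=1$.

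For part 3 with $s\ge 3$, the lower bounds come for free from Riemann--Roch together with $h^0\bigl(\mathcal{O}_C(sP_0)\bigr)\ge 1$. The upper bounds are the main work: it suffices to show $h^0\bigl(\mathcal{O}_C(sP_0)\bigr)\le s-1$, since then Riemann--Roch converts this into both $h^0(K_C-sP_0)\le g-2$ and $h^1(K_C-sP_0)\le g+s-3$ on $C$, which through Künneth give exactly the claimed bounds on $S$. When $s\ge 2g-1$, the divisor $sP_0$ is non-special and $h^0\bigl(\mathcal{O}_C(sP_0)\bigr)=s-g+1\le s-1$ for $g\ge 2$; when $3\le s\le 2g-2$, the divisor is special and Clifford's inequality gives $h^0\bigl(\mathcal{O}_C(sP_0)\bigr)\le \lfloor s/2\rfloor +1$, which is $\le s-1$ for all $s\ge 3$ (the edge case $s=3$ forces $h^0\le 2=s-1$). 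The one delicate point I expect is making sure Clifford applies to every $s$ in the range and that the integer rounding in $\lfloor s/2\rfloor+1$ is handled carefully in the boundary cases $s=3,4$; this is exactly where the hypothesis $g\ge 3$ is needed for consistency of the bounds.
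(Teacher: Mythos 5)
Your reduction to the curve is exactly the paper's: identify $K_S(nT_0)\simeq\pi^*(K_C(nP_0))$, use Leray/K\"unneth to get $h^0(S)=h^0(C,K_C(nP_0))$ and $h^1(S)=h^0(C,K_C(nP_0))+h^1(C,K_C(nP_0))$, then Serre duality and Riemann--Roch to convert everything into $h^0(C,\mathcal{O}_C(sP_0))$; parts 1 and 2 are handled identically (non-speciality for $s>0$, and $C\not\simeq\mathbb{P}^1$ plus non-hyperellipticity to force $h^0(\mathcal{O}_C(sP_0))=1$ for $s=1,2$). The one genuinely different step is the key bound $h^0(\mathcal{O}_C(sP_0))\le s-1$ for $s\ge 3$: the paper gets it by induction from $h^0(\mathcal{O}_C(2P_0))=1$ via the sequence $0\to\mathcal{O}_C(nP_0)\to\mathcal{O}_C((n+1)P_0)\to\mathcal{O}_{P_0}\to 0$, which adds at most $1$ to $h^0$ per step, whereas you invoke Clifford's theorem. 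Your route is fine and in fact gives a sharper bound ($\lfloor s/2\rfloor+1$) in the special range, though the proposition only needs $s-1$. One slip to fix: you assert that $sP_0$ is special for all $3\le s\le 2g-2$, which is false in general --- for $g\le s\le 2g-2$ and generic $P_0$ the divisor can be non-special (e.g.\ $g=3$, $s=3$, $P_0$ not a flex of the plane quartic). The correct dichotomy is special versus non-special, not a degree range; in the non-special case Riemann--Roch gives $h^0=s+1-g\le s-1$ directly (using $g\ge 2$), exactly as in your $s\ge 2g-1$ case, so the conclusion survives, but as written that case is not covered.
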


Before proving the proposition, we need to first compute some coholomogy groups over the base curve $C$.

\begin{lemma}
	For any fixed $P_0 \in C$, we have:
	\begin{enumerate}
		\item  $h^0(C,\mathcal{O}_C(sP_0)) = 1$ and $h^1(C,\mathcal{O}_C(sP_0)) = g-s$ for all $s = 0,1,2$.
		
		\item $1 \leq h^0(C,\mathcal{O}_C(sP_0)) \leq s-1$ and
		$g-s \leq h^1(C,\mathcal{O}_C(sP_0)) \leq g-2$ for all $s \geq 3$. 
	\end{enumerate}
\end{lemma}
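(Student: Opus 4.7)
The plan is to derive everything from the Riemann--Roch formula
\[ h^0(\mathcal{O}_C(sP_0)) - h^1(\mathcal{O}_C(sP_0)) = s - g + 1, \]
Clifford's theorem, and two easy consequences of non-hyperellipticity.

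For part~1, the case $s=0$ is tautological: $h^0(\mathcal{O}_C) = 1$ and $h^1(\mathcal{O}_C) = g$ by definition of the genus. For $s = 1, 2$, I would argue that $h^0(\mathcal{O}_C(sP_0)) = 1$ as follows. Any section not proportional to the constant $1$ would give a non-constant meromorphic function on $C$ whose pole divisor is supported at $P_0$ with total order at most $s$; thus $C$ would admit a non-constant map to $\mathbb{P}^1$ of degree between $1$ and $s$. For $s=1$ this would force $C \simeq \mathbb{P}^1$, contradicting $g \geq 2$; for $s=2$ it would force $C$ to be hyperelliptic, contradicting our standing hypothesis. Riemann--Roch then yields $h^1(\mathcal{O}_C(sP_0)) = g - s$ for $s = 0, 1, 2$.

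For part~2, the lower bound $h^0(\mathcal{O}_C(sP_0)) \geq 1$ is immediate because the constants lie in every space of global sections. For the upper bound $h^0 \leq s - 1$, I would split on whether $sP_0$ is special. If $h^1(\mathcal{O}_C(sP_0)) = 0$, Riemann--Roch gives $h^0 = s - g + 1 \leq s - 1$ since $g \geq 2$. If instead $sP_0$ is special, Clifford's theorem gives $h^0 \leq s/2 + 1$; for $s \geq 4$ this already implies $h^0 \leq s - 1$, while for the borderline value $s = 3$ the integrality of $h^0$ combined with $h^0 \leq 5/2$ yields $h^0 \leq 2 = s - 1$. The two-sided bound on $h^1$ then falls out of Riemann--Roch from the two-sided bound on $h^0$: from $h^1 = h^0 + g - s - 1$ one reads off $g - s \leq h^1 \leq g - 2$, where the lower bound is non-vacuous only when $g > s$ and is otherwise subsumed by $h^1 \geq 0$.

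The only delicate step is the Clifford estimate at $s = 3$, which is tight and relies on the integrality of $h^0$; every other step is a routine application of Riemann--Roch and the non-hyperellipticity hypothesis.
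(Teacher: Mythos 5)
Your proof is correct. Part 1 is essentially the paper's own argument: the $s=0$ case is definitional, and for $s=1,2$ a second section would produce a meromorphic function with a single pole of order $\leq s$ at $P_0$, forcing $C\simeq\mathbb{P}^1$ or $C$ hyperelliptic, both excluded; Riemann--Roch then gives the $h^1$ values. For part 2, however, you take a genuinely different route. The paper tensors up through the skyscraper sequence $0\to\mathcal{O}_C(nP_0)\to\mathcal{O}_C((n+1)P_0)\to\mathcal{O}_{P_0}\to 0$, reads off from the long exact sequence that $h^0$ grows by at most $1$ and $h^1$ drops by at most $1$ at each step, and inducts from the base case $s=2$ where part 1 pins down both numbers exactly. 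You instead bound $h^0(sP_0)$ directly by splitting on whether $sP_0$ is special: Riemann--Roch in the non-special case, Clifford's theorem $h^0\leq s/2+1$ in the special case, with the integrality trick handling the borderline $s=3$; the two-sided $h^1$ bound then follows from Riemann--Roch exactly as in the paper. Both arguments are sound and yield the same bounds. The paper's induction is more elementary (it needs nothing beyond the long exact sequence and the base case already established in part 1) and makes the monotonicity of $h^0$ and $h^1$ in $s$ explicit, which is conceptually aligned with how the bounds degrade linearly in $s$; your Clifford route avoids induction entirely, is shorter for large $s$, and in the special range actually proves the stronger estimate $h^0\leq s/2+1$, though you only need the weaker $s-1$. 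The only point deserving care in your version is the one you flag yourself: Clifford's theorem requires the divisor to be effective and special, both of which hold in your second branch, so the application is legitimate.
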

\begin{proof}
	First note that, by Riemann-Roch, 
	\begin{equation} 
	h^1(C,\mathcal{O}_C(sP_0)) = h^0(C,\mathcal{O}_C(sP_0)) + g - (1+s)
	\end{equation}
	for all $s$. For $s=0,1,2$, it is therefore enough to check that $h^0(C,\mathcal{O}_C(sP_0)) = 1$. If $s=0$, the statement is clear. If $s\geq 1$, recall that
	\begin{equation} 
	H^0(C,\mathcal{O}_C(sP_0)) = \{ f \in \mathcal{M}(C) : (f) \geq -sP_0\}\,,
	\end{equation}
	where $\mathcal{M}(C)$ is the set of meromorphic functions on $C$ and $(f)$ is the divisor of the meromorphic function $f$. In particular, if $f \in  H^0(C,\mathcal{O}_C(sP_0))$ is not constant, then it must have a single pole of order $\leq s$ at $P_0$. If $s=1$, this single pole must have order 1, implying that $f:C \rightarrow \mathbb{P}^1$ is a degree 1 map. In other words, if $h^0(C,\mathcal{O}(P_0)) \geq 2$, then $C$ is isomorphic to $\mathbb{P}^1$, which is impossible since we are assuming that
	$g \geq 3$. Finally, when $s=2$, if $h^0(C,\mathcal{O}_C(2P_0)) \geq 2$, this would imply that there is a non-constant $f \in \mathcal{M}(C)$ with $(f) \geq -2P_0$ so that $f$ must have a pole of order $\leq 2$ at $P_0$. As in the previous case, this pole cannot have order 1 because $C$ has genus $\geq 3$. Therefore, $f$ must have a pole of order 2 at $P_0$, implying that $f: C \rightarrow \mathbb{P}^1$ has degree two. But this would mean that $C$ is hyperelliptic, which again contradicts our assumption and proves 1.. 
	
	The statement of 2. follows from the exact sequence
	\begin{equation} 
	0 \rightarrow \mathcal{O}_C(nP_0) \rightarrow \mathcal{O}_C((n+1)P_0) \rightarrow \mathcal{O}_{P_0} \rightarrow 0
	\end{equation}
	with $n \in \mathbb{Z}$. Indeed, if we take $n=2$, we then have
	\begin{equation} 
	0 \rightarrow \mathcal{O}_C(2P_0) \rightarrow \mathcal{O}_C(3P_0) \rightarrow \mathcal{O}_{P_0} \rightarrow 0\,.
	\end{equation}
	Taking the long cohomology sequence, we obtain
	\[0 \rightarrow H^0(C,\mathcal{O}_C(2P_0)) \rightarrow H^0(C,\mathcal{O}_C(3P_0)) \rightarrow
	H^0(C,\mathcal{O}_{P_0}) \rightarrow H^1(C,\mathcal{O}_C(2P_0)) \rightarrow\]
	\begin{equation} 
	H^1(C,\mathcal{O}_C(3P_0)) \rightarrow H^1(C,\mathcal{O}_{P_0}) \rightarrow 0\,. 
	\end{equation}
	However, $h^0(C,\mathcal{O}_{P_0}) = 1$ and $h^1(C,\mathcal{O}_{P_0}) = 0$ since $\mathcal{O}_{P_0}$ is a torsion sheaf supported on a single point. Moreover, $h^0(C,\mathcal{O}_C(2P_0)) = 1$
	and $h^1(C,\mathcal{O}_C(2P_0)) = g-2$ by 1.. The long exact sequence on cohomology thus reduces to
	\[0 \rightarrow \mathbb{C} \rightarrow H^0(C,\mathcal{O}_C(3P_0)) \rightarrow
	\mathbb{C} \rightarrow \mathbb{C}^{g-2} \rightarrow H^1(C,\mathcal{O}_C(3P_0)) \rightarrow 0\,. \]
	We then see that $1 \leq h^0(C,\mathcal{O}_C(3P_0)) \leq 2$ and $g-3 \leq h^1(C,\mathcal{O}_C(3P_0)) \leq g-2$, implying that the statement of 2. holds for $s=3$. The result for $s \geq 4$ follows by induction.
\end{proof}

Using this lemma, we can now compute the cohomology groups on $S$.

\begin{proof}[Proof of Proposition \ref{cohomology-on-surface}]
	First note that since $S = C \times T$ and $\pi$ is projection onto the first factor, we have $\pi_*(\mathcal{O}_S) = R^1\pi_*(\mathcal{O}_S) = \mathcal{O}_C$. Moreover, 
	$\mathcal{O}_S(nT_0) = \pi^*(\mathcal{O}_C(nP_0))$ for all $n \in \mathbb{Z}$ and, since $K_S = \pi^* K_C$, we have 
	\begin{equation} 
	K_S(nT_0) = \pi^*(\mathcal{O}_C(nP_0)) \otimes K_S = \pi^*(K_C(nP_0))
	\end{equation} 
	for all $n$
	(where $K_C(nP_0) := \mathcal{O}_C(nP_0) \otimes K_C$). Therefore, by the projection formula,
	\begin{equation} 
	\pi_*(K_S(nT_0)) = R^1\pi_*(K_S(nT_0)) = K_C(nP_0)
	\end{equation}
	for all $n$. Consequently,
	\begin{equation} 
	H^0(S,K_S(nT_0)) = H^0(C,\pi_*(K_S(nT_0))) = H^0(C,K_C(nP_0)) 
	\end{equation}
	and, by the Leray spectral sequence,
	\begin{align} 
	H^1(S,K_S(nT_0)) &= H^0(C,R^1\pi_*(K_S(nT_0))) \oplus H^1(C,\pi_*(K_S(nT_0))) \\
	&= H^0(C,K_C(nP_0)) \oplus H^1(C,K_C(nP_0))\,.
	\end{align}
	Moreover, by Serre duality,
	\begin{equation} 
	h^0(C,K_C(nP_0)) = h^1(C,\mathcal{O}_C(-nP_0)) 
	\end{equation}
	and
	\begin{equation} 
	h^1(C,K_C(nP_0)) = h^0(C,\mathcal{O}_C(-nP_0))\,. 
	\end{equation}
	Hence,
	\begin{equation} 
	h^0(S,K_S(nT_0)) = h^1(C,\mathcal{O}_C(-nP_0)) = h^0(C,\mathcal{O}_C(-nP_0)) + g + n -1\,,
	\end{equation}
	where the second equality follows from Riemann-Roch,
	and
	\begin{equation} 
	h^1(S,K_S(nT_0)) =  h^1(C,O_C(-nP_0)) + h^0(C,\mathcal{O}_C(-nP_0))
	= 2h^0(C,\mathcal{O}_C(-nP_0)) + g + n -1\,. 
	\end{equation}
	
	If $n=s>0$, then $h^0(C,\mathcal{O}_C(-sP_0)) = 0$, implying that
	\begin{equation} 
	h^0(S,K_S(sT_0)) = h^1(S,K_S(sT_0)) = g + s -1
	\end{equation}
	for all $s>0$.
	On the other hand, if $n = -s$ with $s=0,1,2$, then $h^0(C,\mathcal{O}_C(sP_0)) = 1$, 
	implying that
	\begin{equation} 
	h^0(S,K_S(-sT_0)) = g - s
	\end{equation}
	and
	\begin{equation} 
	h^1(S,K_S(-sT_0)) = g - s +1\,.
	\end{equation}
	
	Finally, assume that $n = -s$ with $s \geq 3$. Then,
	\begin{equation}
	1 \leq h^0(C,\mathcal{O}_C(sP_0)) \leq s-1\,,
	\end{equation}
	implying that
	\begin{equation} 
	g-s \leq  h^0(S,K_S(-sT_0)) \leq g-2
	\end{equation}
	and
	\begin{equation} 
	g-s+1 \leq  h^1(S,K_S(-sT_0)) \leq g+s-3\,,
	\end{equation}
	proving 3.
\end{proof}

\begin{remark}
	The proposition tells us, in particular, that
	\begin{equation} 
	h^1(S,K_S(-2T_0))) = g-1 < g,
	\end{equation}
	which we used to prove the existence of traceless Higgs fields with $\tilde{i}_*(\hat{v}) \neq 0$ on $S$ in section \ref{s:examples}.	
\end{remark}

\end{document}